\newtheorem{lemma}{\textbf{Lemma}}
\newtheorem{assumption}{Assumption}
\begin{document}
%
\title{Feeling of Presence Maximization: mmWave-Enabled Virtual Reality Meets Deep Reinforcement Learning}
%
%
%

\author{Peng Yang,~\IEEEmembership{Member,~IEEE}, Tony Q. S. Quek,~\IEEEmembership{Fellow,~IEEE}, Jingxuan Chen, Chaoqun You,~\IEEEmembership{Member,~IEEE}, and Xianbin Cao,~\IEEEmembership{Senior Member,~IEEE}
\thanks{
P. Yang, T. Q. S. Quek, and C. You are with the Information Systems Technology and Design, Singapore University of Technology and Design, 487372 Singapore.
J. Chen and X. Cao are with the School of Electronic and Information Engineering, Beihang University, Beijing 100083, China.
}
}
\maketitle

\begin{abstract}
This paper investigates the problem of providing ultra-reliable and energy-efficient virtual reality (VR) experiences for wireless mobile users. To ensure reliable ultra-high-definition (UHD) video frame delivery to mobile users and enhance their immersive visual experiences, a coordinated multipoint (CoMP) transmission technique and millimeter wave (mmWave) communications are exploited. Owing to user movement and time-varying wireless channels, the wireless VR experience enhancement problem is formulated as a sequence-dependent and mixed-integer problem with a goal of maximizing users' feeling of presence (FoP) in the virtual world, subject to power consumption constraints on access points (APs) and users' head-mounted displays (HMDs).
The problem, however, is hard to be directly solved due to the lack of users' accurate tracking information and the sequence-dependent and mixed-integer characteristics. To overcome this challenge, we develop a parallel echo state network (ESN) learning method to predict users' tracking information by training fresh and historical tracking samples separately collected by APs.
With the learnt results, we propose a deep reinforcement learning (DRL) based optimization algorithm to solve the formulated problem. In this algorithm, we implement deep neural networks (DNNs) as a scalable solution to produce integer decision variables and solve a continuous power control problem to criticize the integer decision variables.
Finally, the performance of the proposed algorithm is compared with various benchmark algorithms, and the impact of different design parameters is also discussed.
Simulation results demonstrate that the proposed algorithm is more $4.14\%$ energy-efficient than the benchmark algorithms.
\end{abstract}

\begin{IEEEkeywords}
Virtual reality, coordinated multipoint transmission, feeling of presence, parallel echo state network, deep reinforcement learning
\end{IEEEkeywords}

%
\IEEEpeerreviewmaketitle

\section{Introduction}
%
%
%
%
\IEEEPARstart{V}{irtual} reality (VR) applications have attracted tremendous interest in various fields, including entertainment, education, manufacturing, transportation, healthcare, and many other consumer-oriented services \cite{DBLP:journals/tmm/HouDZB21}. These applications exhibit enormous potential in the next generation of multimedia content envisioned by enterprises and consumers through providing richer and more engaging, and immersive experiences. According to market research \cite{Heather}, the VR ecosystem is predicted to be an $80$ billion market by 2025, roughly the size of the desktop PC market today.

However, several major challenges need to be overcome such that businesses and consumers can get fully on board with VR technology \cite{WiltzMajor}, one of which is to provide compelling content. To this aim, the resolution of provided content must be guaranteed.
In VR applications, VR wearers can either view objects up close or across a wide field of view (FoV) via head-mounted or goggle-type displays (HMDs).
As a result, very subtle defects such as poorly rendering pixels at any point on an HMD may be observed by a user up close, which may degrade users' truly visual experiences.
To create visually realistic images across the HMD, it must have more display pixels per eye, which indicates that ultra-high-definition (UHD) video frame transmission must be enabled for VR applications.
However, the transmission of UHD video frames typically requires $4-5$ times the system bandwidth occupied for delivering a regular high-definition (HD) video \cite{DBLP:journals/tmm/LiuLAC19,DBLP:journals/tcsv/DaiZML20}.
Further, to achieve good user visual experiences, the motion-to-photon latency should be ultra-low (e.g., $10-25$ ms) \cite{DBLP:journals/tmc/LaiHCSDL20,DBLP:conf/icccn/HouLD17,Qualcomm}. High motion-to-photon values will send conflicting signals to the Vestibulo-ocular reflex (VOR) and then might cause dizziness or motion sickness.

Hence, today's high-end VR systems such as Oculus Rift \cite{Oculus} and HTC Vive \cite{Htc} that offer high quality and accurate positional tracking remain tethered to deliver UHD VR video frames while satisfying the stringent low-latency requirement.
Nevertheless, wired VR display may degrade users' seamless visual experiences due to the constraint on the movement of users. Besides, a tethered VR headset presents a potential tripping hazard for users.
Therefore, to provide ultimate VR experiences, VR systems or at least the headset component should be untethered \cite{DBLP:journals/tmc/LaiHCSDL20}.

Recently, the investigation on wireless VR has attracted numerous attention from both industry and academe; of particular interest is how to a) develop mobile (wireless and lightweight) HMDs, b) how to enable seamless and immersive VR experiences on mobile HMDs in a bandwidth-efficiency manner, while satisfying ultra-low-latency requirements.

\subsection{Related work}
On the aspect of designing lightweight VR HMDs, considering heavy image processing tasks, which are usually insufficient in the graphics processing unit (GPU) of a local HMD, one might be persuaded to transfer the image processing from the local HMD to a cloud or network edge units (e.g., edge servers, base stations, and access points (APs)).
For example, the work in \cite{DBLP:journals/tmm/HouDZB21} proposed to enable mobile VR with lightweight VR glasses by completing computation-intensive tasks (such as encoding and rendering) on a cloud/edge server and then delivering video streams to users.
The framework of fog radio access networks, which could significantly relieve the computation burden by taking full advantages of the edge fog computing, was explored in \cite{DBLP:journals/jsac/DangP19} to facilitate the lightweight HMD design.

In terms of proposing VR solutions with improved bandwidth utilization, current studies can be classified into two groups: tiling and video coding \cite{DBLP:conf/hotnets/LiuXGH0V17}
As for tiling, some VR solutions propose to spatially divide VR video frames into small parts called tiles, and only tiles within users' FoV are delivered to users \cite{DBLP:conf/sigcomm/JuHSLLZH17,DBLP:conf/sigcomm/MangianteKNGRS17,DBLP:journals/tmm/GaddamREGH16}. The FoV of a user is defined as the extent of the observable environment at any given time. By sending HD tiles in users' FoV, the bandwidth utilization is improved.
On the aspect of video coding, the VR video is encoded into multiple versions of different quality levels. Viewers receive appropriate versions based on their viewing directions \cite{DBLP:conf/icc/CorbillonSDC17}.

Summarily, to improve bandwidth utilization, the aforementioned works \cite{DBLP:conf/sigcomm/JuHSLLZH17,DBLP:conf/sigcomm/MangianteKNGRS17,DBLP:journals/tmm/GaddamREGH16,DBLP:conf/icc/CorbillonSDC17} either transmit relatively narrow user FoV or deliver HD video frames.
Nevertheless, wider FoV is significantly important for a user to have immersive and presence experiences. Meanwhile, transmitting UHD video frames can enhance users' visual experiences.
To this aim, advanced wireless communication techniques (particularly, millimeter wave (mmWave)), which can significantly improve data rates and reduce propagation latency via providing wide bandwidth transmission, are explored in VR video transmission \cite{DBLP:journals/tmm/LiuLAC19,DBLP:journals/tcom/PerfectoESB20,DBLP:conf/wcnc/ElBambyPBD18}.
For example, the work in \cite{DBLP:journals/tmm/LiuLAC19} utilized a mmWave-enabled communication architecture to support the panoramic and UHD VR video transmission.
Aiming to improve users' immersive VR experiences in a wireless multi-user VR network, a mmWave multicast transmission framework was developed in \cite{DBLP:journals/tcom/PerfectoESB20}.
Besides, the mmWave communication for ultra-reliable and low latency wireless VR was investigated in \cite{DBLP:conf/wcnc/ElBambyPBD18}.

\subsection{Motivation and contributions}
Although mmWave techniques can alleviate the current bottleneck for UHD video delivery, mmWave links are prone to outage as they require line-of-sight (LoS) propagation.
Various physical obstacles in the environment (including users' bodies) may completely break mmWave links \cite{DBLP:journals/twc/ChenSSLY20}. As a result, VR requirements for a perceptible image-quality degradation-free uniform experience cannot be accommodated.
However, the mmWave VR-related works in \cite{DBLP:journals/tmm/LiuLAC19,DBLP:journals/tcom/PerfectoESB20,DBLP:conf/wcnc/ElBambyPBD18} did not effectively investigate the crucial issue of guaranteeing the transmission reliability of VR video frames.
To significantly improve the transmission reliability of VR video frames under low-latency constraints, the coordinated multipoint (CoMP) transmission technique, which can improve the reliability via spatial diversity, can be explored \cite{DBLP:journals/corr/abs-2002-09194}.
Besides, it is extensively considered that proactive computing (e.g., image processing or frame rendering) enabled by adopting machine learning methods is a crucial ability for a wireless VR network to mandate the stringent low-latency requirement of UHD VR video transmission \cite{DBLP:journals/tmm/HouDZB21,cheng2021design,DBLP:journals/twc/ChenSSLY20,DBLP:journals/tcom/SunCTL19}.
Therefore, this paper investigates the issue of maximizing users' feeling of presence (FoP) in their virtual world in a mmWave-enabled VR network incorporating CoMP transmission and machine learning.
The main contributions of this paper are summarized as follows:
\begin{itemize}
\item Owing to the user movement and the time-varying wireless channel conditions, we formulate the issue of maximizing users' FoP in virtual environments as a mixed-integer and sequential decision problem, subject to power consumption constraints on APs and users' HMDs. This problem is difficult to be directly solved by exploring conventional numerical optimization methods due to the lack of accurate users' tracking information (including users' locations and orientation angles) and mixed-integer and sequence-dependent characteristics.
\item As users' historical tracking information is separately collected by diverse APs, a parallel echo state network (ESN) learning method is designed to predict users' tracking information while accelerating the learning process.
\item With the predicted results, we develop a deep reinforcement learning (DRL) based optimization algorithm to tackle the mixed-integer and sequential decision problem. Particularly, to avoid generating infeasible solutions by simultaneously optimizing all variables while alleviating the curse of dimensionality issue, the DRL-based optimization algorithm decomposes the formulated mixed-integer optimization problem into an integer association optimization problem and a continuous power control problem.
    Next, deep neural networks (DNNs) with continuous action output spaces followed by an action quantization scheme are implemented to solve the integer association problem. Given the association results, the power control problem is solved to criticize them and optimize the transmit power.
\item Finally, the performance of the proposed DRL-based optimization algorithm is compared with various benchmark algorithms, and the impact of different design parameters is also discussed. Simulation results demonstrate the effectiveness of the proposed algorithm.
\end{itemize}

\section{System Model and problem formulation}
As shown in Fig. \ref{fig:fig_VR_Scenario}, we consider a mmWave-enabled VR network incorporating a CoMP transmission technique. This network includes a centralized unit (CU) connecting to $J$ distributed units (DUs) via optical fiber links, a set $\mathcal {J}$ of $J$ access points (APs) connected with the DUs, and a set of $\mathcal {U}$ of $N$ ground mobile users wearing HMDs.
To acquire immersive and interactive experiences, users will report their tracking information to their connected APs via reliable uplink communication links. Further, with collected users' tracking information, the CU will centrally simulate and construct virtual environments and coordinately transmit UHD VR videos to users via all APs in real time.
To accomplish the task of enhancing users' immersive and interactive experiences in virtual environments, joint uplink and downlink communications should be considered.
We assume that APs and users can work at both mmWave (exactly, 28 GHz) and sub-6 GHz frequency bands, where the mmWave frequency band is reserved for downlink UHD VR video delivery, and the sub-6 GHz frequency band is allocated for uplink users' tracking information transmission. This is because an ultra-high data rate can be achieved on the mmWave frequency band, and sub-6 GHz can support reliable communications.
Besides, to theoretically model the joint uplink and downlink communications, we suppose that the time domain is discretized into a sequence of time slots in the mmWave-enabled VR network and conduct the system modelling including uplink and downlink transmission models, FoP model, and power consumption model.
\begin{figure}[!t]
\begin{minipage}[t]{0.45\textwidth}
\centering
\includegraphics[width=2.6 in]{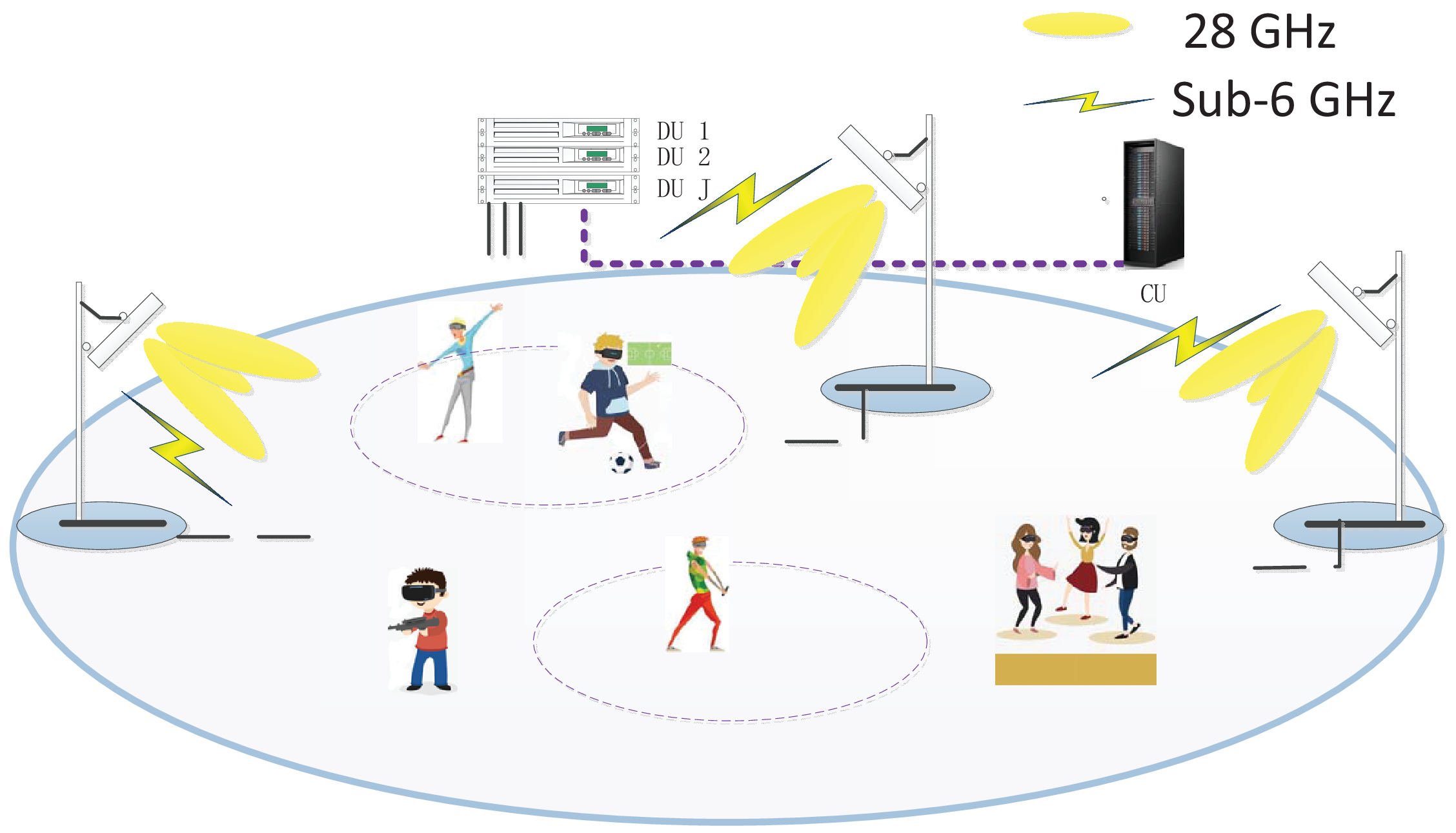}
\caption{A mmWave-enabled VR network incorporating CoMP transmission.}
\label{fig:fig_VR_Scenario}
\end{minipage}
\hspace{0.05\linewidth}
\begin{minipage}[t]{0.45\textwidth}
\centering
\includegraphics[width=2.6 in]{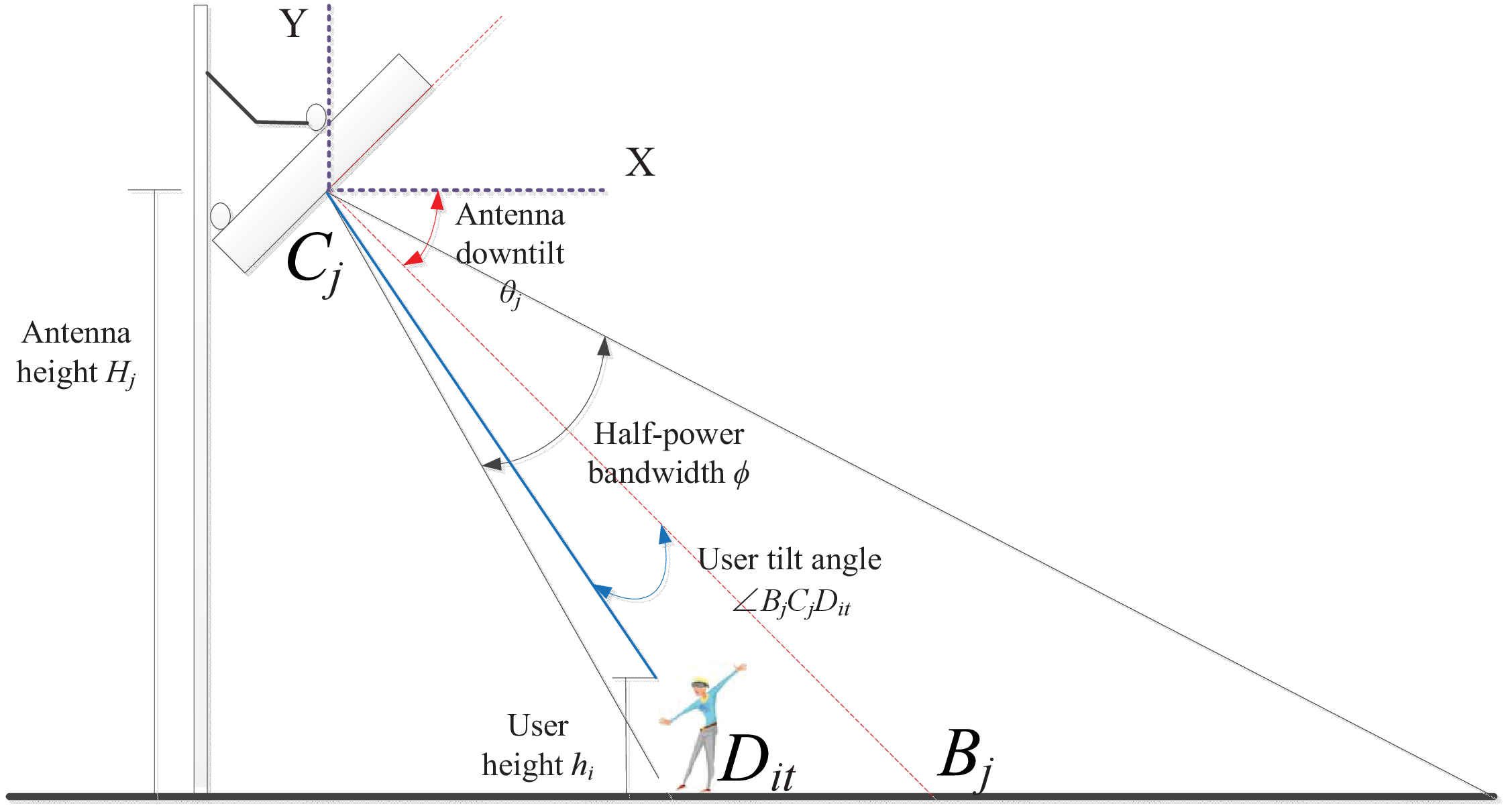}
\caption{Sectored antenna model of an AP.}
\label{fig:fig_VR_antenna_angle}
\end{minipage}
\end{figure}

\subsection{Uplink and downlink transmission models}
\subsubsection{Uplink transmission model}
Denote ${\bm x}_{it}^{\rm 3D}=[x_{it}, y_{it}, h_{i}]^{\rm T}$ as the three dimensional (3D) Cartesian coordinate of the HMD worn by user $i$ for all $i \in {\mathcal U}$ at time slot $t$ and $h_i \sim {\mathcal N}(\bar h, \sigma_h^2)$ is the user height.
$[x_{it}, y_{it}]^{\rm T}$ is the two dimensional (2D) location of user $i$ at time slot $t$.
Denote ${{\bm v}_{j}^{\rm 3D}} = [x_j, y_j, H_j]^{\rm T}$ as the 3D coordinate of the antenna of AP $j$ and $H_j$ is the antenna height.
Owing to the reliability requirement, users' data information (e.g., users' tracking information and profiles) is required to be successfully decoded by corresponding APs.
We express the condition that an AP can successfully decode the received user data packets as follows
\begin{equation}\label{eq:SINR_condition}
SNR_{ijt}^{\rm ul} = \frac{a_{ijt}^{\rm ul} p_{it}c_{ij}{\hat h_{ijt}}}{{N_0}W^{\rm ul}/N} \ge { \theta^{\rm th}}, \forall i, j, t,
\end{equation}
where $a_{ijt}^{\rm ul} \in \{0, 1\}$ is an association variable indicating whether user $i$'s uplink data packets can be successfully decoded by AP $j$ at time slot $t$. The data packets can be decoded if $a_{ijt}^{\rm ul} = 1$; otherwise, $a_{ijt}^{\rm ul} = 0$.
$p_{it}$ is the uplink transmit power of user $i$'s HMD, $c_{ij}$ is the Rayleigh channel gain,
$\hat h_{ijt} = {d_{ijt}^{ - \alpha }({{\bm x}_{it}^{\rm 3D}},{{\bm v}_{j}^{\rm 3D}})}$ is the uplink path-loss from user $i$ to AP $j$ with $\alpha$ being the fading exponent, $d_{ijt}(\cdot)$ denotes the Euclidean distance between user $i$ and AP $j$, $N_0$ denotes the single-side noise spectral density, $W^{\rm ul}$ represents the uplink bandwidth. $\theta^{\rm th}$ is the target signal-to-noise ratio (SNR) experienced at AP $j$ for successfully decoding data packets from user $i$. Besides, considering the reliability requirement of uplink transmission and the stringent power constraint on HMDs, frequency division multiplexing (FDM) technique is adopted in this paper. The adoption of FDM technique can avoid the decoding failure resulting from uplink signal interferences and significantly reduce power consumption without compensating the signal-to-interference-plus-noise ratio (SINR) loss caused by uplink interferences.

Additionally, we assume that each user $i$ can connect to at most one AP $j$ via the uplink channel at each time slot $t$, i.e., $\sum\nolimits_{j \in {\mathcal J}} {a_{ijt}^{\rm ul}}  \le 1$, $\forall i$. This is reasonable because it is unnecessary for each AP to decode all users' data successfully at each time slot $t$. A user merely connects to an AP (e.g., the nearest AP if possible) will greatly reduce power consumption.
Meanwhile, considering the stringent low-latency requirements of VR applications and the time consumption of processing (e.g., decoding and checking) received user data packets, we assume that an AP can serve up to $\tilde M$ users during a time slot, i.e., $\sum\nolimits_{i \in {\mathcal U}} {a_{ijt}^{\rm ul}}  \le \tilde M$, $\forall j$.

\subsubsection{Downlink transmission model}
In the downlink transmission configuration, antenna arrays are deployed to perform directional beamforming. For analysis facilitation, a sectored antenna model \cite{semiari2017inter-operator}, which consists of four components, i.e., the half-power beamwidth $\phi$, the antenna downtilt angle $\theta_j$ $\forall j$, the antenna gain of the mainlobe $G$, and the antenna gain of the sidelobe $g$, shown in Fig. \ref{fig:fig_VR_antenna_angle}, is exploited to approximate actual array beam patterns.
The antenna gain of the transmission link from AP $j$ to user $i$ is
\begin{equation}\label{eq:antenna_gain}
{f_{ijt}} = \left\{ {\begin{array}{*{20}{c}}
G&{ \angle B_jC_jD_{it} \le \frac{\phi }{2},}\\
g&{\rm otherwise,}
\end{array}} \right. \forall i, j, t,
\end{equation}
where $\angle B_jC_jD_{it}$ represents user $i$'s tilt angle towards AP $j$, the location of the point `$B_j$' can be determined by AP $j$'s 2D coordinate $\bm v_j^{\rm 2D} = [x_j, y_j]^{\rm T}$ and $\theta_j$, the point `$D_{it}$' represent user $i$'s position, the point `$C_j$' denotes the position of AP $j$'s antenna.

For any AP $j$, the 2D coordinate $\bm x_{bj}^{\rm 2D} = [x_{bj}, y_{bj}]^{\rm T}$ of point `$B_j$' can be given by
\begin{equation}\label{body_block}
{x_{bj}} = {d_j}({x_o} - {x_j})/{r_j} + {x_j}, \forall j,
\end{equation}
\begin{equation}\label{body_block}
{y_{bj}} = {d_j}({y_o} - {y_j})/{r_j} + {y_j}, \forall j,
\end{equation}
where $d_j = H_j/\tan(\theta_j)$, $r_j = ||\bm x_o - \bm v_j^{\rm 2D}||_2$, and $\bm x_o = [x_o, y_o]^{\rm T}$ is 2D coordinate of the center point of the considered communication area.

Then, user $i$'s tilt angle towards AP $j$ can be written as
\begin{equation}\label{eq:tilt_angle}
\angle B_jC_jD_{it} = \arccos \left( \frac{{\overrightarrow {{C_j}{B_j}}  \cdot \overrightarrow {{C_j}{D_{it}}} }}{{{{\left\| {{C_j}{B_j}} \right\|}_2}{{\left\| {{C_j}{D_{it}}} \right\|}_2}}} \right), \forall i, j, t,
\end{equation}
where direction vectors ${\overrightarrow {{C_j}{B_j}}} = (x_{bj}-x_j, y_{bj} - y_j, -H_j)$ and ${\overrightarrow {{C_j}{D_{it}}} } = (x_{it} - x_j, y_{it} - y_j, h_i - H_j)$.


A mmWave link may be blocked if a user turns around; this is because the user wears an HMD in front of his/her forehead. Denote $\vartheta $ as the maximum angle within which an AP can experience LoS transmission towards its downlink associated users.
For user $i$ at time slot $t$, an indicator variable $b_{ijt}$ introduced to indicate the blockage effect of user $i$'s body is given by
\begin{equation}\label{body_block}
{b_{ijt}} = \left\{ {\begin{array}{*{20}{c}}
1& \angle ({\vec A_{jit}},{\vec x_{it}}) > \vartheta,\\
0&{\rm otherwise,}
\end{array}} \right. \forall i,j,t,
\end{equation}
where $\angle ({\vec A_{jit}},{\vec x_{it}})$ represents the orientation angle of user $i$ at time slot $t$, which can be determined by locations of both user $i$ and AP $j$,\footnote{In this paper, we consider the case of determining users' orientation angles via the locations of both APs and users. Certainly, our proposed learning method is also applicable to scenarios where users' orientation angles need to be predicted.} ${\vec x}_{it} = (x_{it} - x_{it-1}, y_{it} - y_{it-1})$ is a direction vector. When $t =1$, the direction vector ${\vec x}_{i1} = (x_{i1}, y_{i1})$. ${\vec A_{jit}}=(x_{j} - x_{it}, y_{j} - y_{it})$ is a direction vector between the AP $j$ and user $i$.

Given ${\vec A_{jit}}$ and ${\vec x_{it}}$, we can calculate the orientation angle of user $i$ that is also the angle between ${\vec A_{jit}}$ and ${\vec x}_{it}$ by
\begin{equation}\label{eq:orientation_angle}
\angle ({\vec A_{jit}},{\vec x_{it}}) = \arccos \left( \frac{{\vec A_{jit}} \cdot {\vec x_{it}}}{||{\vec A_{jit}}||_2||{\vec x_{it}}||_2} \right), \forall i, j, t.
\end{equation}

The channel gain coefficient $h_{ijkt}$ of an LoS link and a non line-of-sight (NLoS) link between the $k$-th antenna element of AP $j$ and user $i$ at time slot $t$ can take the form \cite{semiari2017inter-operator}
\begin{equation}\label{eq:path_loss}
\begin{array}{l}
10{\log _{10}}({h_{ijkt}}h_{ijkt}^{\rm{H}}) = \left\{ {\begin{array}{*{20}{l}}
{\begin{array}{*{20}{l}}
{10{\eta _{{\rm{LoS}}}}{{\log }_{10}}({d_{ijt}}(x_{it}^{{\rm{3D}}},v_j^{{\rm{3D}}}))}
{ + 20{{\log }_{10}}\left( {\frac{{4\pi {f_c}}}{c}} \right) + } \\
\qquad \qquad {10{{\log }_{10}}{f_{ijt}} + \mu _k^{{\rm{LoS}}},}
\end{array}}&{{b_{ijt}} = 0}\\
{\begin{array}{*{20}{l}}
{10{\eta _{{\rm{NLoS}}}}{{\log }_{10}}({d_{ijt}}(x_{it}^{{\rm{3D}}},v_j^{{\rm{3D}}}))}
{ + 20{{\log }_{10}}\left( {\frac{{4\pi {f_c}}}{c}} \right) + } \\
\qquad \qquad {10{{\log }_{10}}{f_{ijt}} + \mu _k^{{\rm{NLoS}}},}
\end{array}}&{{b_{ijt}} = 1}
\end{array}} \right. \forall i,j,k,t,
\end{array}
\end{equation}
where $f_c$ (in Hz) is the carrier frequency, $c$ (in m/s) the light speed, $\eta_{\rm LoS}$ (in dB) and $\eta_{\rm NLoS}$ (in dB) the path-loss exponents of LoS and NLoS links, respectively, $\mu_{{\rm LoS}}  \sim {\mathcal {CN}}(0, \sigma_{\rm LoS}^2)$ (in dB) and $\mu_{{\rm NLoS}}  \sim {\mathcal {CN}}(0, \sigma_{\rm NLoS}^2)$ (in dB).

For any user $i$, to satisfy its immersive experience requirement, its downlink achievable data rate (denoted by ${r_{it}^{{\rm{dl}}}}$) from cooperative APs should be no less than a data rate threshold $\gamma^{\rm th}$, i.e.,
\begin{equation}\label{eq:data_rate_condition}
{r_{it}^{\rm dl} \ge { \gamma^{\rm th}} }, \text{ } \forall i, t.
\end{equation}

Define $a_{it}^{\rm dl} \in \{0, 1\}$ as an association variable indicating whether the user $i$'s data rate requirement can be satisfied at time slot $t$. $a_{it}^{\rm dl} = 1$ indicates that its data rate requirement can be satisfied; otherwise, $a_{it}^{\rm dl} = 0$. Then, for any user $i$ at time slot $t$, according to Shannon capacity formula and the principle of CoMP transmission, we can calculate ${r_{it}^{\rm dl}}$ by
\begin{equation}\label{eq:downlink_rate}
{r_{it}^{{\rm{dl}}} = {W^{{\rm{dl}}}}{{\log }_2}\left( {1 + \frac{{{a_{it}^{\rm dl}}{{| \sum\nolimits_{j\in {\mathcal J}}{{\bm h_{ijt}^{\rm H}}{\bm g_{ijt}}} |}^2}}}{{{N_0}{W^{{\rm{dl}}}} + I_{it}^{\rm dl}}}} \right)}, \text{ } \forall i, t,
\end{equation}
where $\bm h_{ijt} = [h_{ij1t}, \ldots, h_{ijKt}]^{\rm T} \in {\mathbb C}^{K}$ is a channel gain coefficient vector with $K$ denoting the number of antenna elements, $\bm g_{ijt} \in {\mathbb C}^{K}$ is the transmit beamformer pointed at user $i$ from AP $j$, $W^{\rm dl}$ represents the downlink system bandwidth. Owing to the directional propagation, for user $i$, not all users will be its interfering users. It is regarded that users whose distances from user $i$ are small than $D^{\rm th}$ will be user $i$'s interfering users, where $D^{\rm th}$ is determined by antenna configuration of APs (e.g., antenna height and downtilt angle). Denote the set of interfering users of user $i$ at time slot $t$ by ${\mathcal M}_{it}$, then, we have $I_{it}^{\rm dl}= \sum\nolimits_{m \in {\mathcal M}_{it}}{a_{mt}^{\rm dl} {{{| \sum\nolimits_{j\in {\mathcal J}}{{\bm h_{mjt}^{\rm H}}{\bm g_{mjt}}} |}^2}}}$.

\subsection{Feeling of presence model}
In VR applications, FoP represents an event that does not drag users back from engaging and immersive fictitious environments \cite{DBLP:journals/presence/BouchardSRR08}. For wireless VR, the degrading FoP can be caused by the collection of inaccurate users' tracking information via APs and the reception of low-quality VR video frames. Therefore, we consider the uplink user tracking information transmission and downlink VR video delivery when modelling the FoP experienced by users.
Mathematically, over a period of time slots, we model the FoP experienced by users as the following
\begin{equation}\label{eq:BIP_wireless_transmission}
\bar { B} (T) = \frac{1}{T}\sum\nolimits_{t = 1}^T {\left( {B_t^{\rm ul}\left( {\bm a_t^{\rm ul}} \right) + B_t^{\rm dl}\left( {\bm a_t^{\rm dl}} \right)} \right)},
\end{equation}
where ${{B}}_t^{{\rm{ul}}}\left( {\bm a_t^{{\rm{ul}}}} \right) = \frac{1}{{{N}}}\sum\limits_{i \in {\mathcal U}} {\sum\limits_{j \in {\mathcal J}} {a_{ijt}^{{\rm{ul}}}} } $ with $\bm a_{t}^{\rm ul} = [a_{11t}^{\rm ul}, \ldots, a_{ijt}^{\rm ul}, \ldots, a_{NJt}^{\rm ul}]^{\rm T}$, $B_t^{{\rm{dl}}}\left( {\bm a_t^{{\rm{dl}}}} \right) = \frac{1}{N} {\sum\limits_{i \in {{\mathcal U}}} {a_{it}^{{\rm{dl}}}} } $ with $\bm a_{t}^{\rm dl} = [a_{1t}^{\rm dl}, \ldots, a_{it}^{\rm dl}, \ldots, a_{Nt}^{\rm dl}]^{\rm T}$.

\subsection{Power consumption model}
HMDs are generally battery-driven and constrained by the maximum instantaneous power. For any user $i$'s HMD, define $p_{it}^{\rm tot}$ as its instantaneous power consumption including the transmit power and circuit power consumption (e.g., power consumption of mixers, frequency synthesizers, and digital-to-analog converters) at time slot $t$, we then have
\begin{equation}\label{eq:energy_consumption}
{p_{it}^{\rm tot}}  \le {\tilde p_i}, \forall i, t,
\end{equation}
where $p_{it}^{\rm tot} = {p_{it}} + p_i^c$, $p_i^c$ denotes the HMD's circuit power consumption during a time slot, and $\tilde p_i$ is a constant.
Without loss of generality, we assume that all users' HMDs are homogenous.

The instantaneous power consumption of each AP is also constrained. As CoMP transmission technique is explored, for any AP $j$, we can model its instantaneous power consumption at time slot $t$ as the following
\begin{equation}\label{eq:power_consumption}
{\sum\nolimits_{i \in {\mathcal U}}{a_{it}^{\rm dl} \bm g_{ijt}^{\rm H} \bm g_{ijt}}} +  E_j^c \le \tilde E_j, \forall j, t,
\end{equation}
where $ E_j^c$ is a constant representing the circuit power consumption, $\tilde E_j$ is the maximum instantaneous power of AP $j$.

\subsection{Objective function and problem formulation}
To guarantee immersive and interactive VR experiences of users over a period of time slots, uplink user data packets should be successfully decoded, and downlink data rate requirements of users should be satisfied at each time slot; that is, users' FoP should be maximized.
According to (\ref{eq:SINR_condition}) and (\ref{eq:BIP_wireless_transmission}), one might believe that increasing the transmit power of users' HMDs would be an appropriate way of enhancing users' FoP.
However, as users' HMDs are usually powered by batteries, they are encouraged to work in an energy-efficient mode to prolong their working duration. Further, reducing HMDs' power consumption indicates less heat generation, which can enhance users' VR experiences.
Therefore, our goal is to maximize users' FoP while minimizing the power consumption of HMDs over a period of time slots.
Combining with the above analysis, we can formulate the problem of enhancing users' immersive experiences as below
\begin{subequations}\label{eq:original_problem_p0}
\begin{alignat}{2}
& \mathop {\rm maximize}\limits_{\{\bm a_{t}^{\rm ul}, \bm a_{t}^{\rm dl}, \bm p_t, \bm g_{ijt}\}} \text{ } \mathop {\lim \inf }\limits_{T \to \infty } \frac{1}{T}\sum\limits_{t = 1}^T {\left(B_t^{{\rm{ul}}}\left( {\bm a_t^{{\rm{ul}}}} \right) + B_t^{{\rm{dl}}}\left( {\bm a_t^{{\rm{dl}}}} \right)\right)}  -  \frac{1}{T}\sum\limits_{t = 1}^T {\sum\limits_{i \in {\mathcal U}} {\sum\limits_{j \in {\cal J}} {a_{ijt}^{{\rm{ul}}}p_{it}^{{\rm{tot}}}/{{\tilde p}_i}} } } \allowdisplaybreaks[4]\\
& {\rm s.t.} \quad \sum\nolimits_{j \in {\mathcal J}} {a_{ijt}^{\rm ul}}  \le 1, \forall i, t \allowdisplaybreaks[4] \\
& \qquad \sum\nolimits_{i \in {\mathcal U}} {a_{ijt}^{\rm ul}}  \le \tilde M, \forall j, t \\
& \qquad {a_{ijt}^{\rm ul}} \in \{0, 1\}, \forall i, j, t \\
& \qquad {a_{it}^{\rm dl}} \in \{0, 1\}, \forall i, t \\
& \qquad 0 \le p_{it} \le \tilde p_{i} - p_i^c, \forall i, t \\
& \qquad \rm {(\ref{eq:SINR_condition}), (\ref{eq:data_rate_condition}), (\ref{eq:power_consumption}),}
\end{alignat}
\end{subequations}
where $\bm p_t=[p_{1t},p_{2t},\ldots,p_{Nt}]^{\rm T}$.

However, the solution to (\ref{eq:original_problem_p0}) is highly challenging due to the unknown users' tracking information at each time slot. Given users' tracking information, the solution to (\ref{eq:original_problem_p0}) is still NP-hard or even non-detectable. 
It can be confirmed that (\ref{eq:original_problem_p0}) is a mixed-integer non-linear programming (MINLP) problem as it simultaneously contains zero-one variables, continuous variables, and non-linear constraints. Further, we can know that (\ref{eq:data_rate_condition}) and (\ref{eq:power_consumption}) are non-convex with respect to (w.r.t) $a_{it}^{\rm dl}$ and ${\bm g}_{ijt}$, $\forall i$, $j$, by evaluating the Hessian matrix.
To tackle the tricky problem, we develop a novel solution framework as depicted in Fig. \ref{fig:fig_solution_framework}. In this framework, we first propose to predict users' tracking information using a machine learning method.
With the predicted results, we then develop a DRL-based optimization algorithm to solve the MINLP problem.
The procedure of solving (\ref{eq:original_problem_p0}) is elaborated in the following sections.
\begin{figure}[!t]
\begin{minipage}[t]{0.45\textwidth}
\centering
\includegraphics[width=2.6 in]{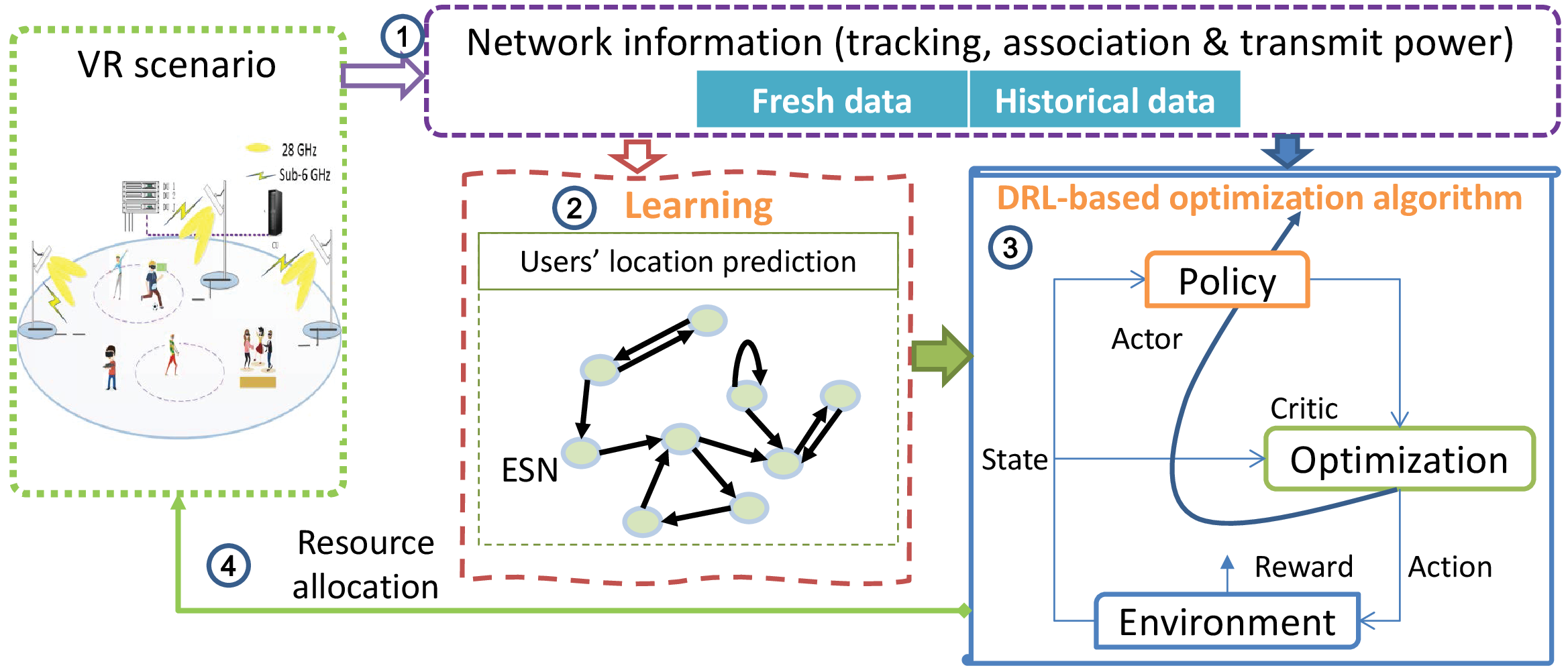}
\caption{Working diagram of a framework of solving (\ref{eq:original_problem_p0}).}
\label{fig:fig_solution_framework}
\end{minipage}
\hspace{0.05\linewidth}
\begin{minipage}[t]{0.45\textwidth}
\centering
\includegraphics[width=2.4 in]{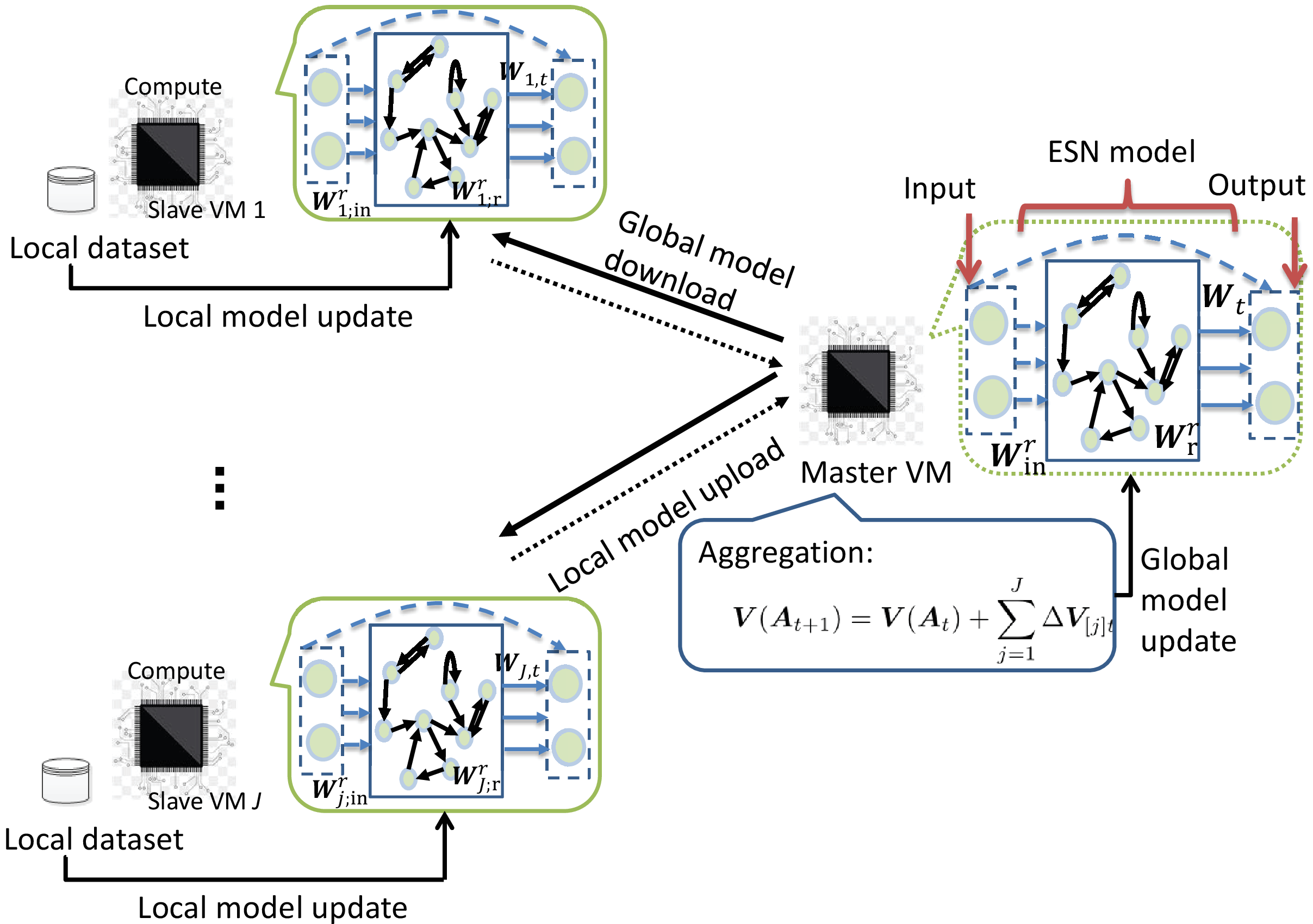}
\caption{Architecture of the parallel ESN learning method.}
\label{fig:fig_Federated_ESN_architecture}
\end{minipage}
\end{figure}

\section{Users' Location Prediction}
As analyzed above, the efficient user-AP association and transmit power of both HMDs and APs are configured on the basis of the accurate perception of users' tracking information. If the association and transmit power are identified without knowledge of users' tracking information, users may have degrading VR experiences, and the working duration of users' HMDs may be dramatically shortened.
Meanwhile, owing to the stringent low latency requirement, the user-AP association and transmit power should be proactively determined to enhance users' immersive and interactive VR experiences.
Hence, APs must collect fresh and historical tracking information for users' tracking information prediction in future time slots.
With predicted tracking information, the user-AP association and transmit power can be configured in advance.
Certainly, from (\ref{eq:orientation_angle}), we observe that users' orientation angles can be obtained by their and APs' locations; thus, we only predict users' locations in this section.
Machine learning is convinced as a promising proposal to predict users' locations. In machine learning methods, the accuracy and completeness of sample collection are crucial for accurate model training.
However, the user-AP association may vary with user movement, which indicates that location information of each user may scatter in multiple APs, and each AP may only collect partial location information of its associated users after a period of time.
To tackle this issue, we develop a parallel machine learning method, which exploits $J$ slave virtual machines (VMs) created in the CU to train learning models for each user, as shown in Fig. \ref{fig:fig_Federated_ESN_architecture}. Besides, for each AP, it will feed its locally collected location information to a slave VM for training.
In this way, the prediction process can also be accelerated.
With the predicted results, the CU can then proactively allocate system resources by solving (\ref{eq:original_problem_p0}).

\subsection{Echo state network}
In this section, the principle of echo state network (ESN) is exploited to train users' location prediction model as the ESN method can efficiently analyze the correlation of users' location information and quickly converge to obtain users' predicted locations \cite{DBLP:journals/nn/ScardapaneWP16}.
It is noteworthy that there are some differences between the traditional ESN method and the developed parallel ESN learning method.
The traditional ESN method is a centralized learning method with the requirement of the aggregation of all users' locations scattered in all APs, which is not required for the parallel ESN learning method.
What's more, the traditional ESN method can only be used to conduct data prediction in a time slot while the parallel ESN learning method can predict users' locations in $M > 1$ time slots.
An ESN is a recurrent neural network that can be partitioned into three components: input, ESN model, and output, as shown in Fig. \ref{fig:fig_Federated_ESN_architecture}.
For any user $i \in {\mathcal U}$, the $N_i$-dimensional input vector $\bm x_{it} \in {\mathbb R}^{N_i }$ is fed to an $N_r$-dimensional reservoir whose internal state $\bm s_{i(t-1)} \in {\mathbb R}^{N_r }$ is updated according to the state equation
\begin{equation}\label{eq:state}
\bm s_{it} = \tanh\left( \bm W_{\rm in}^r \bm x_{it} + \bm W_{\rm r}^r \bm s_{i(t-1)} \right),
\end{equation}
where $\bm W_{\rm in}^r \in {\mathbb R}^{N_r \times N_i}$ and $\bm W_{\rm r}^r \in {\mathbb R}^{N_r \times N_r}$ are randomly generated matrices with each matrix element locating in the interval $(0, 1)$.

The evaluated output of the ESN at time slot $t$ is given by
\begin{equation}\label{eq:output}
\hat {\bm y}_{i(t+1)} = \bm W_{\rm in}^o \bm x_{it} + \bm W_{\rm r}^o \bm s_{it},
\end{equation}
where $\bm W_{\rm in}^o \in {\mathbb R}^{N_o \times N_i}$, $\bm W_{\rm r}^o \in {\mathbb R}^{N_o \times N_r}$ are trained based on collected training data samples. 

To train the ESN model, suppose we are provided with a sequence of $Q$ desired input-outputs pairs $\{(\bm x_{i1}, {\bm y}_{i1}), \ldots, (\bm x_{iQ}, {\bm y}_{iQ})\}$ of user $i$, where $\bm y_{it} \in {\mathbb R}^{N_o }$ is the target location of user $i$ at time slot $t$.
Define the hidden matrix $\bm X_{it}$ as
\begin{equation}\label{eq:hidden_matrix}
{\bm X_{it}} = \left[ {\begin{array}{*{20}{c}}
\begin{array}{l}
{\bm x_{i1}}\\
{\bm s_{i1}}
\end{array}& \cdots &\begin{array}{l}
{\bm x_{iQ}}\\
{\bm s_{iQ}}
\end{array}
\end{array}} \right].
\end{equation}

The optimal output weight matrix is then achieved by solving the following regularized least-square problem
\begin{equation}\label{eq:least_square_prob}
\bm W_t^{\star} = \mathop {\arg \min }\limits_{\bm W_t \in {{\mathbb R}^{({N_i} + {N_r}) \times {N_o}}}} \text{ }  \frac{1}{Q}l\left(\bm X_{it}^{\rm T} \bm W_t \right ) + \xi r (\bm W_t)
\end{equation}
where $\bm W_t = [\bm W_{\rm in}^o \bm W_{\rm r}^o]^{\rm T}$, $\xi \in {\mathbb R}_{+}$ is a positive scalar known as regularization factor, the loss function $l(\bm X_{it}^{\rm T} \bm W_t) = \frac{1}{2} || {\bm X_{it}^{\rm T} \bm W_t - \bm Y_{it}} ||_F^2$, the regulator $r(\bm W_t) = || \bm W_t ||_F^2$, and the target location matrix $\bm Y_{it} = [\bm y_{i1}^{\rm T}; \ldots; \bm y_{iQ}^{\rm T}] \in {\mathbb R}^{Q \times N_o}$.

\subsection{Parallel ESN learning method for users' location prediction}
Based on the principle of the ESN method, we next elaborate on the procedure of the parallel ESN learning method for users' location prediction.
To facilitate the analysis, we make the following assumptions on the regulator and the loss function.
\begin{assumption}\label{as:assumption_r_function}
\rm {
The function $r: {\mathbb R}^{m \times n} \to {\mathbb R}$ is $\zeta $-strongly convex, i.e., $\forall i \in \{1, 2, \ldots, n\}$, $\forall \bm X $, and $\Delta \bm X \in {\mathbb R}^{m \times n}$, we have \cite{DBLP:journals/tcom/YangLQP20}
\begin{equation}\label{eq:r_function_expansion}
r(\bm X + \Delta \bm X) \ge r(\bm X) + \nabla r(\bm X) \odot \Delta \bm X + {\zeta} ||\Delta \bm X||_F^2/{2},
\end{equation}
where $\nabla r(\cdot)$ denotes the gradient of $r(\cdot)$.
}
\end{assumption}

\begin{assumption}\label{as:assumption_loss_function}
\rm {
The function $l : {\mathbb R} \to {\mathbb R}$ are $\frac{1}{\mu}$-smooth, i.e., $\forall i \in \{1, 2, \ldots, n\}$, $\forall x$, and $\Delta x \in {\mathbb R}$, we have
\begin{equation}\label{eq:loss_function_expansion}
l(x + \Delta x) \le l(x) + \nabla l (x) \Delta x + (\Delta x)^2/{2\mu},
\end{equation}
where $\nabla l(\cdot)$ represents the gradient of $l(\cdot)$.
}
\end{assumption}

According to Fenchel-Rockafeller duality, we can formulate the local dual optimization problem of (\ref{eq:least_square_prob}) in the following way.

\begin{lemma}\label{lem:lemma_1}
\rm {For a set of $J$ slave VMs and a typical user $i$, the dual problem of (\ref{eq:least_square_prob}) can be written as follows
\begin{equation}\label{eq:dual_problem}
\mathop {\rm maximize }\limits_{\bm A \in {{\mathbb R}^{Q \times {N_o}}}} \text{ }  \left\{ { - \xi r^{\star}\left( {\frac{1}{{\xi Q}}{\bm A^{\rm T}}{\bm X^{\rm T}}} \right) - \frac{1}{Q}\sum\limits_{m = 1}^{Q} {\sum\limits_{n = 1}^{{N_o}} {l^{\star}( - {a_{mn}})} } } \right\}
\end{equation}
where
\begin{equation}\label{eq:r_star}
{r^{\star}}(\bm C) = \frac{1}{4} {\sum\nolimits_{n=1}^{N_o}{\bm z_n^{\rm T} \bm C \bm C^{\rm T} \bm z_n} },
\end{equation}
\begin{equation}\label{eq:l_star}
l^{\star}(-a_{mn}) = {-a_{mn}y_{mn} + a_{mn}^2/{2} },
\end{equation}
$\bm A \in {\mathbb R}^{Q\times N_o}$ is a Lagrangian multiplier matrix, $\bm z_n \in {\mathbb R}^{N_o }$ is a column vector with the $n$-th element being one and all other elements being zero, $\bm X$ is a lightened notation of ${\bm X_{it}} =$ $\left[ {\begin{array}{*{20}{c}}
\begin{array}{l}
{\bm x_{i(t - 1)}}\\
{\bm s_{i(t - 1)}}
\end{array}& \cdots &\begin{array}{l}
{\bm x_{i(t - Q)}}\\
{\bm s_{i(t - Q)}}
\end{array}
\end{array}} \right]$,
and $y_{mn}$ is an element of matrix ${\bm Y} = [\bm y_{it}^{\rm T}; \ldots; \bm y_{i(t-Q+1)}^{\rm T}]$ at the location of the $m$-th row and the $n$-th column.
}
\end{lemma}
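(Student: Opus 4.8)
The plan is to invoke Fenchel--Rockafellar duality on the composite objective in (\ref{eq:least_square_prob}) and then evaluate the two convex conjugates that appear in the claimed dual. Writing the primal in the canonical form $\min_{\bm W_t}\ \tfrac{1}{Q}\,l\!\left(\bm X_{it}^{\rm T}\bm W_t\right)+\xi\,r(\bm W_t)$, with the linear map $\bm W_t\mapsto \bm X_{it}^{\rm T}\bm W_t$, the associated Fenchel dual reads $\max_{\bm A}\ -\xi\,r^{\star}\!\left(\tfrac{1}{\xi Q}\bm A^{\rm T}\bm X^{\rm T}\right)-\tfrac{1}{Q}\sum_{m}\sum_{n}l^{\star}(-a_{mn})$, where $\bm A\in\mathbb R^{Q\times N_o}$ is the multiplier matrix associated with the constraint $\bm Z=\bm X_{it}^{\rm T}\bm W_t$, and the loss, being separable over the $Q\times N_o$ entries of its matrix argument, contributes the element-wise sum of scalar conjugates $l^{\star}(-a_{mn})$. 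First I would state this duality carefully as the matrix/block version of the standard primal--dual pair used in distributed empirical risk minimization, noting that strong duality and attainment hold because $r$ is $\zeta$-strongly convex (Assumption~\ref{as:assumption_r_function}) and $l$ is $\tfrac1\mu$-smooth and convex (Assumption~\ref{as:assumption_loss_function}), so both conjugates are finite-valued and the qualification conditions are satisfied.

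Next I would compute $r^{\star}$. Since $r(\bm W_t)=\|\bm W_t\|_F^2$, its conjugate is $r^{\star}(\bm C)=\sup_{\bm W}\ \langle\bm C,\bm W\rangle-\|\bm W\|_F^2$; setting the gradient to zero gives $\bm W=\bm C/2$ and hence $r^{\star}(\bm C)=\tfrac14\|\bm C\|_F^2$. Rewriting the Frobenius norm as a sum over the standard basis of $\mathbb R^{N_o}$, $\|\bm C\|_F^2=\mathrm{tr}(\bm C\bm C^{\rm T})=\sum_{n=1}^{N_o}\bm z_n^{\rm T}\bm C\bm C^{\rm T}\bm z_n$, yields exactly (\ref{eq:r_star}); evaluating at $\bm C=\tfrac{1}{\xi Q}\bm A^{\rm T}\bm X^{\rm T}$ produces the first term of (\ref{eq:dual_problem}).

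Then I would compute $l^{\star}$ entrywise. Each scalar component of the loss is $\tfrac12(x-y_{mn})^2$, whose conjugate is $\sup_x\ a x-\tfrac12(x-y_{mn})^2$; the maximizer is $x=a+y_{mn}$, giving the scalar conjugate $\tfrac12 a^2+a\,y_{mn}$. Substituting $a\mapsto-a_{mn}$ gives $l^{\star}(-a_{mn})=\tfrac12 a_{mn}^2-a_{mn}y_{mn}$, which is (\ref{eq:l_star}); here $y_{mn}$ is the $(m,n)$ entry of the reindexed target matrix $\bm Y=[\bm y_{it}^{\rm T};\ldots;\bm y_{i(t-Q+1)}^{\rm T}]$, matching the lightened notation for $\bm X_{it}$ used in the statement. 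Assembling the two pieces into the Fenchel dual then gives (\ref{eq:dual_problem}).

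The step I expect to be the main obstacle is the first one: fixing the Fenchel--Rockafellar pairing and all of its constants correctly in the \emph{matrix-valued} setting. One must be careful that the linear operator is $\bm W_t\mapsto\bm X_{it}^{\rm T}\bm W_t$, so its adjoint multiplies the dual matrix by $\bm X_{it}$ on the left, which is what puts $\bm A^{\rm T}\bm X^{\rm T}$ (equivalently $\bm X\bm A$ up to transpose, and $\|\cdot\|_F$ is transpose-invariant) inside $r^{\star}$; that the $\tfrac1Q$ multiplying $l$ carries through to become both the $\tfrac{1}{\xi Q}$ scaling of the argument of $r^{\star}$ and the $\tfrac1Q$ prefactor of the conjugate sum; and that the block/elementwise separability of the quadratic loss is precisely what turns a single matrix conjugate into the double sum $\sum_{m}\sum_{n}l^{\star}(-a_{mn})$. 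Once the pairing is pinned down, the two conjugate evaluations are short quadratic optimizations.
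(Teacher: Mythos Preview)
Your proposal is correct and follows essentially the same route as the paper: the paper introduces the auxiliary variable $\bm U=\bm X^{\rm T}\bm W$ with Lagrange multiplier $\bm A$, separates the resulting saddle into an infimum over $\bm W$ (yielding $r^\star$) and an infimum over $\bm U$ (yielding the entrywise sum of $l^\star$), and then evaluates both conjugates by setting gradients to zero, exactly as you do. The only difference is presentational---you invoke Fenchel--Rockafellar duality as a named result and then compute the conjugates, whereas the paper writes out the Lagrangian decomposition explicitly before arriving at the same two conjugate evaluations.
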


\begin{proof}
Please refer to Appendix A.
\end{proof}

Denote the objective function of (\ref{eq:dual_problem}) as $D(\bm A)$, and define $\bm V(\bm A) := \frac{1}{\xi Q} {(\bm X \bm A)^{\rm T}} \in {\mathbb R}^{N_o \times (N_i + N_r)}$, we can then rewrite $D(\bm A)$ as
\begin{equation}\label{eq:DA_decomposition}
D(\bm A) = -\xi r^{\star}(\bm V(\bm A)) - \sum\nolimits_{j = 1}^{J} {R_j( \bm A_{[j]})},
\end{equation}
where $R_j(\bm A_{[j]}) = \frac{1}{Q}\sum\limits_{m \in {\mathcal Q}_j} {\sum\limits_{{n} = 1}^{N_o} {l^{\star}( - {a_{mn}})} } $, $\bm A_{[j]}= \hat {\bm Z}_j \bm A$, and $\hat {\bm Z}_{j} \in {\mathbb R}^{Q \times Q}$ is a square matrix with $J \times J$ blocks. In $\hat {\bm Z}_j$, the block in the $j$-th row and $j$-th column is a $q_j \times q_j$ identity matrix with $q_j$ being the cardinality of a set ${\mathcal Q}_j$ and all other blocks are zero matrices, ${\mathcal Q}_j$ is an index set including the indices of $Q$ data samples fed to slave VM $j$.

Then, for a given matrix $ {\bm A}^t$, varying its value by $\Delta {\bm A}^t$ will change (\ref{eq:DA_decomposition}) as below
\begin{equation}\label{eq:Delta_A}
\begin{array}{l}
D( {\bm A}^t + \Delta  {\bm A}^t) = -\xi r^{\star}(\bm V( {\bm A}^t + \Delta  {\bm A}^t)) - \sum\nolimits_{j = 1}^{J} {R_j(  {\bm A}_{[j]}^t + \Delta {\bm A}_{[j]}^t)},
\end{array}
\end{equation}
where $\Delta {\bm A}_{[j]}^t = \hat {\bm Z}_j \Delta {\bm A}^t$.

Note that the second term of the right-hand side (RHS) of (\ref{eq:Delta_A}) includes the local changes of each VM $j$, while the first term involves the global variations.

As $r(\cdot)$ is $\zeta$-strongly convex, $r^{\star}(\cdot)$ is then $\frac{1}{\zeta}$-smooth \cite{DBLP:journals/tcom/YangLQP20}. Thus, we can calculate the upper bound of $r^{\star}(V( {\bm A}^t + \Delta  {\bm A}^t))$ as follows
\begin{equation}\label{eq:r_bound}
\begin{array}{l}
r^{\star}(V({\bm A}^t + \Delta {\bm A}^t)) \le r^{\star}\left( {\bm V({\bm A}^t)} \right) + \frac{1}{\xi Q}\sum\limits_{n = 1}^{{N_o}} \bm z_n^{\rm T} {\nabla r^{\star}(\bm V({\bm A}^t)) \bm X \Delta {\bm A}^t{\bm z_n}} +  \frac{\kappa }{{2{{\left( {\xi Q} \right)}^2}}}\sum\limits_{n = 1}^{{N_o}} {{{\left\| {\bm X \Delta {\bm A}^t{\bm z_n}} \right\|}^2}} \\
 = r^{\star}\left( {\bm V({\bm A}^t)} \right) + \frac{1}{\xi Q}\sum\limits_{j = 1}^J {\sum\limits_{n = 1}^{{N_o}} \bm z_n^{\rm T} {\nabla r^{\star}(\bm V({\bm A}^t)) \bm X_{[j]} \Delta {\bm A}_{[j]}^t {\bm z_n}} }
 + \frac{\kappa }{{2{{\left( {\xi Q} \right)}^2}}}\sum\limits_{j = 1}^J {\sum\limits_{n = 1}^{{N_o}} {{{\left\| {\bm X_{[j]} \Delta {\bm A}_{[j]}^t {\bm z_n}} \right\|}^2}} },
\end{array}
\end{equation}
where $\bm X_{[j]} = \bm X \hat {\bm Z}_j$, $\kappa >\frac{1}{\zeta}$ is a data dependent constant measuring the difficulty of the partition to the whole samples.

By substituting (\ref{eq:r_bound}) into (\ref{eq:Delta_A}), we obtain
\begin{equation}\label{eq:D_bound}
\begin{array}{l}
D({\bm A}^t + \Delta {\bm A}^t) \ge - {\xi} r^{\star}\left( {\bm V({\bm A}^t)} \right) - \frac{1}{Q} \sum\limits_{j = 1}^J {\sum\limits_{n = 1}^{{N_o}} \bm z_n^{\rm T}{\nabla r^{\star}(\bm V({\bm A}^t)) \bm X_{[j]} \Delta {\bm A}_{[j]}^t {\bm z_n}} } \\
\qquad \qquad \qquad \quad -\frac{\kappa }{{2 \xi {{{ Q} }^2}}}\sum\limits_{j = 1}^J {\sum\limits_{n = 1}^{{N_o}} {{{\left\| { \bm X_{[j]} \Delta {\bm A}_{[j]}^t {\bm z_n}} \right\|}^2}} } - \sum\limits_{j = 1}^{J} {R_j( {\bm A}_{[j]}^t + \Delta {\bm A}_{[j]}^t)}.
\end{array}
\end{equation}

From (\ref{eq:D_bound}), we observe that the problem of maximizing $D({\bm A}^t + \Delta {\bm A}^t)$ can be decomposed into $J$ subproblems, and $J$ slave VMs can then be exploited to optimize these subproblems separately.
If slave VM $j$ can optimize $\Delta {\bm A}^t$ using its collected data samples by maximizing the RHS of (\ref{eq:D_bound}), the resultant improvements can be aggregated to drive $\bm D({\bm A}^t)$ toward the optimum. The detailed procedure is described below.

As shown in Fig. \ref{fig:fig_Federated_ESN_architecture}, during any communication round $t$, a master VM produces $\bm V(\bm A^t)$ using updates received at the last round and shares it with all slave VMs. The task at any slave VM $j$ is to obtain $\Delta \bm A_{[j]}^t$ by maximizing the following problem
\begin{equation}\label{eq:original_prob}
\begin{array}{l}
\Delta {\bm A_{[j]}^{t\star}} = \mathop {\arg \max }\limits_{\Delta {\bm A_{[j]}^t} \in {{\mathbb R}^{Q \times {N_o}}}} \text{ } \Delta {D_j}\left( {\Delta {\bm A_{[j]}^t};\bm V({\bm A^t}),{\bm A_{[j]}^t}} \right)\\
 = \mathop {\arg \max }\limits_{\Delta {\bm A_{[j]}^t} \in {{\mathbb R}^{Q \times {N_o}}}} \left\{ { - {R_j}\left( {{\bm A_{[j]}^t} + \Delta {\bm A_{[j]}^t}} \right) - \frac{\xi }{J}r^{\star}(\bm V({\bm A^t}))} \right.
\left.  - \frac{1}{Q} \sum\limits_{n = 1}^{{N_o}} \bm z_n^{\rm T} {\nabla r^{\star}(\bm V({\bm A^t})) \bm X_{[j]} \Delta {\bm A_{[j]}^t}{\bm z_n}}  - \right. \\
\quad \left. \frac{\kappa }{{2\xi {Q^2}}}\sum\limits_{n = 1}^{{N_o}} {{{\left\| { \bm X_{[j]}\Delta {\bm A_{[j]}^t}{\bm z_n}} \right\|}^2}}  \right\}.
\end{array}
\end{equation}

Calculate the derivative of $\Delta {D_j}\left( {\Delta {\bm A_{[j]}^t};\bm V({\bm A^t}),{\bm A_{[j]}^t}} \right)$ over $\Delta {\bm A_{[j]}^t}$, and force the derivative result to be zero, we have
\begin{equation}\label{eq:delta_a_expression}
\begin{array}{l}
\Delta {\bm A}_{[j]}^{t\star} = {\left( {{{ {{\hat {\bm Z}_j}} }} + \frac{\kappa }{{\xi Q}} \bm X_{[j]}^{\rm T}{\bm X_{[j]}}} \right)^{ - 1}} \left( {{\bm Y_{[j]}} - \bm A_{[j]}^t - \frac{1}{2}\bm X_{[j]}^{\rm T}{\bm V^{\rm T}}({\bm A_t})} \right),
\end{array}
\end{equation}
where $\bm Y_{[j]} = \hat {\bm Z}_j \bm Y$.

Next, slave VM $j$, $\forall j$, sends $\Delta \bm V_{[j]}^t = \frac{1}{{\xi Q}} {{{( {\bm X_{[j]}\Delta {\bm A_{[j]}^{t\star}}} )}^{\rm T}}} $ to the master VM. The master VM updates the global model as $\bm V(\bm A^t + \Delta \bm A^t) =  \bm V({\bm A^t}) + \sum\nolimits_{j = 1}^J {\Delta \bm V_{[j]}^t} $. Finally, alteratively update $\bm V(\bm A ^ t)$ and $\{\Delta \bm A_{[j]}^{t\star}\}_{j=1}^J$ on the global and local sides, respectively. It is expected that the solution to the dual problem can be enhanced at every step and will converge after several iterations.

At time slot $t$, based on the above derivation, the parallel ESN learning method for predicting locations of user $i$, $\forall i$, in $M$ time slots can be summarized in Algorithm \ref{Alg:Alg1}.
\begin{algorithm}
\caption{Parallel ESN learning for user location prediction}
\label{Alg:Alg1}
\begin{algorithmic}[1]
\STATE \textbf{Initialization:} Data samples of all slave VMs. For any slave VM $j$, it randomly initiates a starting point $ \bm A_{[j]}^0 \in {\mathbb R}^{Q \times N_o}$. The master VM collects $\frac{1}{\xi Q} {( \bm X_{[j]} \bm A_{[j]}^0)^{\rm T}}$ from all slave VMs, generates $\bm V(\bm A^0)= \sum_{j=1}^J {\frac{1}{\xi Q} {( \bm X_{[j]} \bm A_{[j]}^0)^{\rm T}}}$, and then share the model $\bm V(\bm A^0)$ with all slave VMs. Let $\kappa = J/{\zeta}$.
\FOR{$r = 0 : \bar r_{\rm max}-1$}
    \FOR{each slave VM $j \in \{1, 2, \ldots, J\}$ in parallel}
    \STATE {Calculate $\Delta {\bm A}_{[j]}^{r\star}$ using (\ref{eq:delta_a_expression})}, update and store the local Lagrangian multiplier
    \begin{equation}\label{eq:lagrangian_update}
    \bm A_{[j]}^{r+1} = \bm A_{[j]}^r +  \Delta \bm A_{[j]}^{r\star}/({r+1}).
    \end{equation}
    \STATE Compute the following local model and send it to the master VM
    \begin{equation}\label{eq:V_update}
    \Delta \bm V_{[j]}^r =  {{\left( { \bm X_{[j]} \Delta {\bm A_{[j]}^{r\star}}} \right)^{\rm T}}}/{{\xi Q}}.
    \end{equation}
    \ENDFOR
\STATE Given local models, the master VM updates the global model as
\begin{equation}\label{eq:v_t_update}
\bm V(\bm A^{r + 1}) = \bm V(\bm A^r) + \sum\nolimits_{j = 1}^J {\Delta \bm { V}_{[j]}^r},
\end{equation}
and then share the updated global model $\bm V(\bm A^{r+1})$ with all slave VMs.
\ENDFOR
\STATE Let $\bm W^{\rm T} = {\nabla r^{\star}(\bm V({\bm A^r}))}$ and predict user $i$'s location $\hat {\bm y}_{it}$ by (\ref{eq:output}). Meanwhile, by iteratively assigning $\bm x_{i(t+1)} = \hat {\bm y}_{it}$, each user $i$'s locations in $M$ time slots can be obtained.
\STATE \textbf{Output:} The predicted locations of user $i$, $\hat {\bm Y}_{it}= [\hat {\bm y}_{i(t+1)}^{\rm T}; \ldots; \hat {\bm y}_{i(t+M)}^{\rm T}]$, $\forall i$.
\end{algorithmic}
\end{algorithm}

\section{DRL-based Optimization Algorithm}
Given the predicted locations of all users, it is still challenging to solve the original problem owing to its non-linear and mixed-integer characteristics.
Alternative optimization is extensively considered as an effective scheme of solving MINLP problems.
Unfortunately, the popular alternative optimization scheme cannot be adopted in this paper. This is because the alternative optimization scheme is of often high computational complexity, and the original problem is also a sequential decision problem requiring an MINLP problem to be solved at each time slot. Remarkably, calling an optimization scheme with a high computational complexity at each time slot is unacceptable for latency-sensitive VR applications.

Reinforcement learning methods can be explored to solve sequential decision problems.
For example, the works in \cite{DBLP:journals/twc/BennisPBHP13,DBLP:journals/tvt/YangCXDXW19} proposed reinforcement learning methods to solve sequential decision problems with a discrete decision space and a continuous decision space, respectively.
However, how to solve sequential decision problems simultaneously involving discrete and continuous decision variables (e.g., the problem (\ref{eq:original_problem_p0})) is a significant and understudied problem.

In this paper, we propose a deep reinforcement learning (DRL)-based optimization algorithm to solve (\ref{eq:original_problem_p0}).
Specifically, we design a DNN joint with an action quantization scheme to produce a set of association actions of high diversity. Given the association actions, a continuous optimization problem is solved to criticize them and optimize the continuous variables. The detailed procedure is presented in the following subsections.

\subsection{Vertical decomposition}
Define a vector $\bm g_{it} = [\bm g_{i1t}; \ldots; \bm g_{ijt}; \ldots; \bm g_{iJt}] \in {\mathbb C}^{JK}$ and a vector $\bm h_{it} = [f_{i1t}\bm h_{i1t}; \ldots; f_{ijt}\bm h_{ijt};$ $\ldots; f_{iJt}\bm h_{iJt}] \in {\mathbb C}^{JK}$, $\forall i$, $t$. Let matrix $\bm G_{it} = \bm g_{it} \bm g_{it}^{\rm T}$ and matrix $\bm H_{it} = \bm h_{it} \bm h_{it}^{\rm T} $.
As ${\rm tr}({\bm {AB}}) = {\rm tr}({\bm {BA}})$ for matrices $\bm A$ and $\bm B$ of compatible dimensions, the signal power received by user $i \in {\mathcal U}$ can be expressed as ${{{{| \sum\nolimits_{j\in {\mathcal J}}{{f_{it}}{\bm h_{it}^{\rm T}}{\bm g_{ijt}}} |}^2}}} = {| {{\bm h}_{it}^{\rm T}{{\bm g}_{it}}} |^2} = {\left( {{\bm h}_{it}^{\rm T}{{\bm g}_{it}}} \right)^{\rm T}}{\bm h}_{it}^{\rm T}{{\bm g}_{it}} = {\rm tr}({\bm g}_{it}^{\rm T}{{\bm h}_{it}}{\bm h}_{it}^{\rm T}{{\bm g}_{it}}) = {\rm tr}({{\bm h}_{it}}{\bm h}_{it}^{\rm T}{{\bm g}_{it}}{\bm g}_{it}^{\rm T}) = {\rm tr}({{\bm H}_{it}}{{\bm G}_{it}})$.
Likewise, by introducing a square matrix $\bm Z_j \in {\mathbb R}^{JK \times JK}$ with $J \times J$ blocks, the transmit power for serving users can be written as $ {{\bm g}_{ijt}^{\rm T}{{\bm g}_{ijt}}}  = {\rm tr}({\bm Z}_j{{\bm G}_{it}})$. Besides, each block in $\bm Z_j$ is a $K \times K$ matrix. In $\bm Z_j$, the block in the $j$-th row and $j$-th column is a $K \times K$ identity matrix, and all other blocks are zero matrices. Then, by applying ${\bm G_{it}} = {\bm g_{it}}{\bm g}_{it}^{\rm T}$ $\Leftrightarrow {\bm G_{it}} \succeq 0$ and ${{\rm rank}({\bm G_{it}}) \le 1}$, we can convert (\ref{eq:original_problem_p0}) to the following problem
\begin{subequations}\label{eq:transformed_problem}
\begin{alignat}{2}
& \mathop {{\rm{maximize}}}\limits_{\{\bm a_{t}^{\rm ul}, \bm a_{t}^{\rm dl}, \bm p_t, \bm G_{it}\}} \text{ } \bar B(T) - \frac{1}{T}\sum\limits_{t = 1}^T {\sum\limits_{i \in {\mathcal U}} {\sum\limits_{j \in {\mathcal J}} {a_{ijt}^{{\rm{ul}}}p_{it}^{{\rm{tot}}}/{{\tilde p}_i}} } } \\
& {\rm s.t.} \quad {\log _2}\left( {1 + \frac{{a_{it}^{{\rm{dl}}}{\rm{tr}}({\bm H_{it}}{\bm G_{it}})}}{{{N_0}{W^{{\rm{dl}}}} + \sum\nolimits_{m \in {{\mathcal M}_{it}}} {a_{mt}^{{\rm{dl}}}{\rm{tr}}({\bm H_{mt}}{\bm G_{mt}})} }}} \right)  \ge {{\gamma ^{{\rm{th}}}}}/{{W^{{\rm{dl}}}}}, \forall i,t \\
& \qquad \sum\nolimits_{i \in {\mathcal U}} {a_{it}^{{\rm{dl}}}{\rm{tr}}({\bm Z_j}{\bm G_{it}})}  + {{\tilde E}_j} \le {E_j},\forall j,t \\
& \qquad {\bm G_{it}} \succeq 0,\forall i,t \\
& \qquad {\rm{rank}}({\bm G_{it}}) \le 1,\forall i,t \\
& \qquad \rm {(\ref{eq:SINR_condition}), (\ref{eq:original_problem_p0}b)-(\ref{eq:original_problem_p0}f).}
\end{alignat}
\end{subequations}

Like (\ref{eq:original_problem_p0}), (\ref{eq:transformed_problem}) is difficult to be directly solved; thus, we first vertically decompose it into the following two subproblems.
\begin{itemize}
  \item  Uplink optimization subproblem: The uplink optimization subproblem is formulated as
  \begin{subequations}\label{eq:uplink_subproblem}
    \begin{alignat}{2}
    & \mathop {{\rm{maximize}}}\limits_{\{\bm a_{t}^{\rm ul}, \bm p_t\}} \quad \frac{1}{T}\sum\limits_{t = 1}^T \left({B_t^{\rm ul}\left( {\bm a_t^{\rm ul}} \right) - \sum\limits_{i \in {\mathcal U}} {\sum\limits_{j \in {\mathcal J}} {a_{ijt}^{{\rm{ul}}}p_{it}^{{\rm{tot}}}/{{\tilde p}_i}} } }\right) \\
    & {\rm s.t.} \text{ } \quad
     \rm { (\ref{eq:SINR_condition}),(\ref{eq:original_problem_p0}b)-(\ref{eq:original_problem_p0}d),(\ref{eq:original_problem_p0}f).}
    \end{alignat}
    \end{subequations}
  \item  Downlink optimization subproblem: The downlink optimization subproblem can be formulated as follows
    \begin{subequations}\label{eq:downlink_subproblem}
    \begin{alignat}{2}
    & \mathop {{\rm{maximize}}}\limits_{\{\bm a_{t}^{\rm dl}, \bm G_{it}\}} \quad \frac{1}{T}\sum\limits_{t = 1}^T {B_t^{\rm dl}\left( {\bm a_t^{\rm dl}} \right)}  \\
    & {\rm s.t.} \text{ } \quad
    \rm { (\ref{eq:original_problem_p0}e),(\ref{eq:transformed_problem}b)-(\ref{eq:transformed_problem}e).}
    \end{alignat}
    \end{subequations}
\end{itemize}

Next, we propose to solve the two subproblems separately by exploring DRL approaches.

\subsection{Solution to the uplink optimization subproblem}
(\ref{eq:uplink_subproblem}) is confirmed to be a mixed-integer and sequence-dependent optimization subproblem.
Fig. \ref{fig:fig_DNN_framework} shows a DRL approach of solving (\ref{eq:uplink_subproblem}).
\begin{figure*}[!t]
\centering
\includegraphics[width=4.1 in]{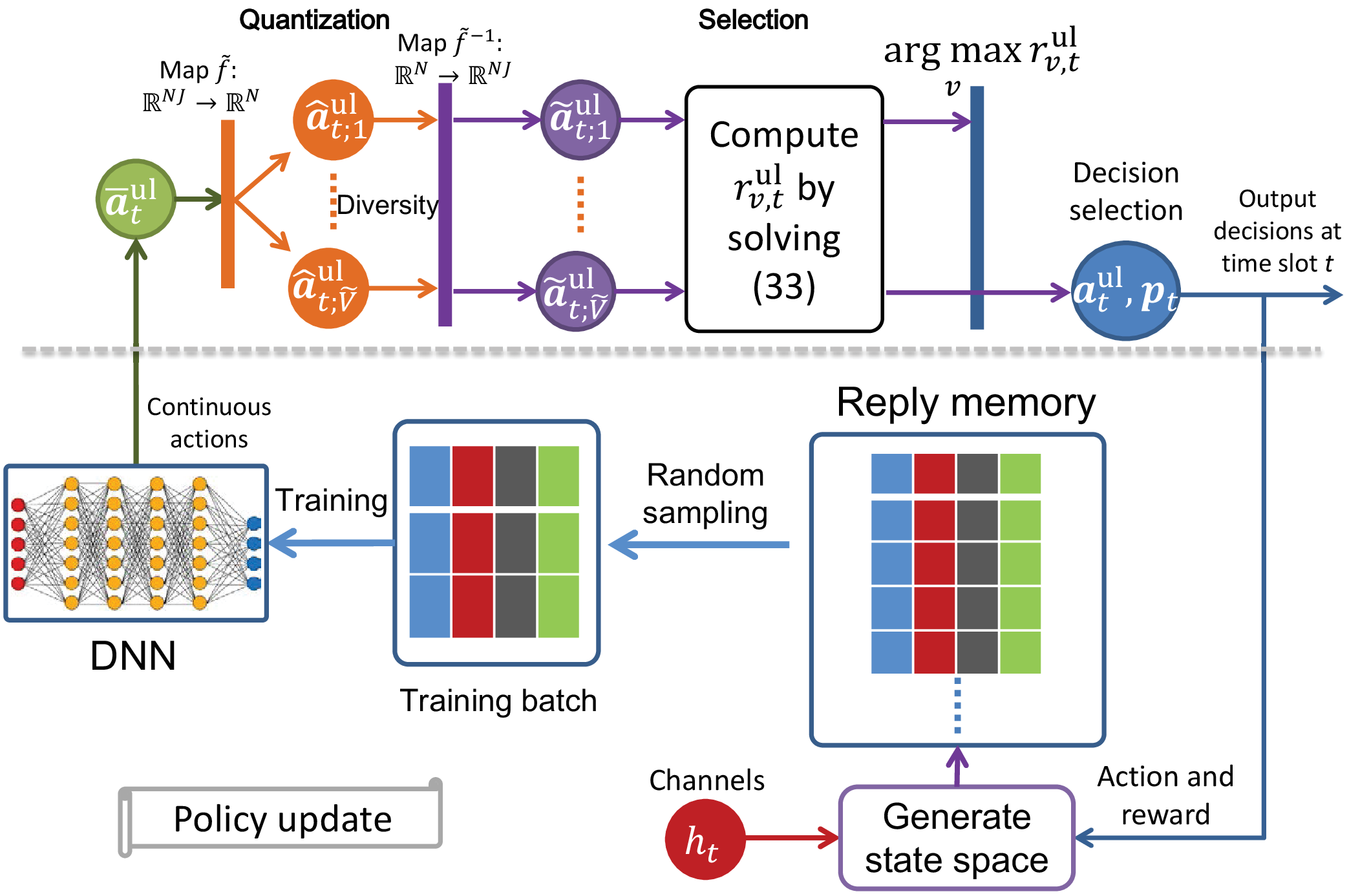}
\caption{A DRL approach of association and transmit power optimization.}
\label{fig:fig_DNN_framework}
\end{figure*}
In this figure, a DNN is trained to produce continuous actions.
The continuous actions are then quantized into a group of association (or discrete) actions.
Given the association actions, we solve an optimization problem to select an association action maximizing the reward.
Next, we describe the designing process of solving (\ref{eq:uplink_subproblem}) using a DRL-based uplink optimization method in detail.

\subsubsection{Action, state, and reward design}
First, we elaborate on the design of the state space, action space, and reward function of the DRL-based method. The HMDs' transmit power and the varying channel gains caused by users' movement and/or time-varying wireless channel environments have a significant impact on whether uplink transmission signals can be successfully decoded by APs. In addition, each AP has a limited ability to decode uplink transmission signals simultaneously.
Therefore, we design the state space, action space, and reward function of the DRL-based method as the following.
\begin{itemize}
\item \textbf{state space $\bm s_t^{\rm ul}$:} $\bm s_t^{\rm ul} = [{\bm m}_t; \hat {\bm h}_{t}^{\rm ul}; {\bm p}_t]$ is a column vector, where $m_{jt} \in {\bm m}_t \in {\mathbb R}^{J }$, $\forall j$, denotes the number of users successfully access to AP $j$ at time slot $t$. Besides, the state space involves the path-loss from user $i$ to AP $j$, $\hat h_{ijt} \in \hat {\bm h}_{t}^{\rm ul} \in {\mathbb R}^{NJ }$, $\forall i$, $j$, $t$, and the transmit power of user $i$'s HMD at time slot $t$, $p_{it} \in {\bm p}_t \in {\mathbb R}^{N }$, $\forall i$, $t$.
\item \textbf{action space $\bm a_{t}^{\rm ul}$:} $\bm a_t^{\rm ul} = [a_{11t}^{\rm ul}, \ldots, a_{1Jt}^{\rm ul}, \ldots, a_{NJt}^{\rm ul}]^{\rm T} \in {\mathbb R}^{NJ }$ with $a_{ijt}^{\rm ul} \in \{0, 1\}$. The action of the DRL-based method is to deliver users' data information to associated APs.
\item \textbf{reward $r_{t}^{\rm ul}$:} given $\bm a_{t}^{\rm ul}$, the reward $r_{t}^{\rm ul}$ is the objective function value of the following power control subproblem.
    \begin{subequations}\label{eq:power_allocation_uplink}
    \begin{alignat}{2}
    & r_{t}^{\rm ul} = \mathop {{\rm{maximize}}}\limits_{{ \bm p_t }} \text{ } B_{t}^{\rm ul}({\bm a}_t^{\rm ul})
            - \sum\nolimits_{i \in {\mathcal U}}{\sum\nolimits_{j\in{\mathcal J}}{a_{ijt}^{{\rm{ul}}}p_{it}^{{\rm{tot}}}/{{\tilde p}_i}}} \allowdisplaybreaks[4] \\
    & {\rm s.t.} \text{ } \quad \rm {(\ref{eq:SINR_condition}),(\ref{eq:original_problem_p0}f).}
    \end{alignat}
    \end{subequations}
\end{itemize}

\subsubsection{Training process of the DNN}
For the DNN module $\bar {\bm a}_{t}^{\rm ul} = \mu({\bm s}_t^{\rm ul}|\theta_t^{\mu})$ shown in Fig. \ref{fig:fig_DNN_framework}, where $\bar {\bm a}_{t}^{\rm ul} = [\bar {\bm a}_{1t}^{\rm ul};\ldots;\bar {\bm a}_{Nt}^{\rm ul}]$ and $\theta_t^{\mu}$ represents network parameters, we explore a two-layer fully-connected feedforward neural network with network parameters being initialized by a Xavier initialization scheme. There are $N_1^{\mu}$ and $N_2^{\mu}$ neurons in the $1^{\rm st}$ and $2^{\rm nd}$ hidden layers of the constructed DNN, respectively. Here, we adopt the ReLU function as the activation function in these hidden layers. For the output layer, a sigmoid activation function is leveraged such that relaxed association variables satisfy $\bar a_{ijt}^{\rm ul} \in (0, 1)$.
In the action-exploration phase, the exploration noise $\epsilon {N_f}$ is added to the output layer of the DNN, where $\epsilon \in (0, 1)$ decays over time and ${N_f} \sim {\mathcal N}(0, \sigma^2)$.

To train the DNN effectively, the experience replay technique is exploited. This is because there are two special characteristics in the process of enhancing users' fictitious experiences: 1) the collected input state values $\bm s_t^{\rm ul}$ incrementally arrive as users move to new positions, instead of all made available at the beginning of the training; 2) APs consecutively collect state values indicating that the collected state values may be closely correlated. The DNN may oscillate or diverge without breaking the correlation among the input state values.
Specifically, at each training epoch $t$, a new training sample $(\bm s_t^{\rm ul},  {\bm a}_t^{\rm ul}, \bm s_{t+1}^{\rm ul})$ is added to the replay memory. When the memory is filled, the newly generated sample replaces the oldest one. We randomly choose a minibatch of training samples $\{(\bm s_{\tau}^{\rm ul},  {\bm a}_{\tau}^{\rm ul}, \bm s_{\tau+1}^{\rm ul})| \tau \in {\mathcal T}_t\}$ from the replay memory, where ${\mathcal T}_t$ is a set of training epoch indices. The network parameters $\theta_t^{\mu}$ are trained using the ADAM method \cite{DBLP:journals/corr/KingmaB14} to reduce the averaged cross-entropy loss
\begin{equation}\label{eq:mse}
\begin{array}{l}
L(\theta_t^{\mu}) = - \frac{1}{{|{{\mathcal T}_t}|}}\sum\nolimits_{\tau  \in {{\mathcal T}_t}} ( {({\bm a}_{\tau}^{\rm ul})^{\rm T}} \log {\bar {\bm a}_{\tau}^{\rm ul}} + (1-{{\bm a}_{\tau}^{\rm ul}})^{\rm T} \log (1 - {\bar {\bm a}_{\tau}^{\rm ul}})  ).
\end{array}
\end{equation}

As evaluated in the simulation, we can train the DNN every $T_{ti}$ epochs after collecting a sufficient number of new data samples.

\subsubsection{Action quantization and selection method}
In the previous subsection, we design a continuous policy function and generate a continuous action space. However, a discrete action space is required in this paper. To this aim, the generated continuous action should be quantized, as shown in Fig. \ref{fig:fig_DNN_framework}.
A quantized action will directly determine the feasibility of the optimization subproblem and then the convergence performance of the DRL-based optimization method.
To improve the convergence performance, we should increase the diversity of the quantized action set, which including all quantized actions.
Specifically, we quantize the continuous action $\bar {\bm a}_t^{\rm ul}$ to obtain $\tilde V \in \{1, 2, \ldots, 2^N\}$ groups of association actions and denote by ${\bar {\bm a}_{t;v}^{\rm ul}}$ the $v$-th group of actions.
Given ${\bar {\bm a}_{it;v}^{\rm ul}}$, (\ref{eq:power_allocation_uplink}) is reduced to a linear programming problem, and we can derive its closed-form solution as below
\begin{equation}\label{eq:power_expression}
{p_{it}} = \left\{ {\begin{array}{*{20}{l}}
{\sum\limits_j {\frac{{a_{ijt}^{{\rm{ul}}}{\theta ^{{\rm{th}}}}{N_0}{W^{{\rm{ul}}}}}}{{N{f_i}{{\hat h}_{ijt}}}},} }&{\sum\limits_j {\frac{{a_{ijt}^{{\rm{ul}}}{\theta ^{{\rm{th}}}}{N_0}{W^{{\rm{ul}}}}}}{{N{f_i}{{\hat h}_{ijt}}}}}  \le {{\tilde p}_i} - p_i^c,}\\
{0,}&{{\rm otherwise.}}
\end{array}} \right.
\end{equation}

Besides, a great $\tilde V$ will result in higher diversity in the quantized action set but a higher computational complexity, and vice versa.
To balance the performance and complexity, we set $\tilde V=N$ and propose a lightweight action quantization and selection method. The detailed steps of quantizing and selecting association actions are given in Algorithm \ref{Alg:action_quantization}.
\begin{algorithm}
\caption{Action quantization and selection}
\label{Alg:action_quantization}
\begin{algorithmic}[1]
\STATE \textbf{Input:} The output action of the uplink DNN, ${\bar {\bm a}_{t}^{\rm ul}}$.
\STATE Arrange ${\bar {\bm a}_{t}^{\rm ul}}$ as a matrix of size $N \times J$ and generate a vector $\hat {\bm a}_t^{\rm ul} = \left\{ {\max [\bar a_{i1t}^{\rm ul}, \ldots ,\bar a_{iJt}^{\rm ul}],\forall i} \right\}$.
\STATE Generate the reference action vector $\bar {\bm b}_t = [\bar b_{1t}, \ldots, \bar b_{vt}, \ldots, \bar b_{\tilde Vt}]^{\rm T}$ by sorting the absolute value of all elements of $\hat {\bm a}_t^{\rm ul}$ in ascending order.
\STATE For any user $i$, generate the $1^{\rm st}$ group of association actions by
    \begin{equation}\label{eq:1st_binary_action}
        {\hat a_{it;1}^{\rm ul}} = \left\{ {\begin{array}{*{20}{c}}
            {1,}&{{\hat a_{it}^{\rm ul}} > 0.5,}\\
            {0,}&{{\hat a_{it}^{\rm ul}} \le 0.5.}
            \end{array}} \right.
    \end{equation}
\STATE For any user $i$, generate the remaining $\tilde V-1$ groups of association actions by
    \begin{equation}\label{eq:v_t_update}
        {\hat a_{it;v}^{\rm ul}} = \left\{ {\begin{array}{*{20}{c}}
            {1,}&{{\hat a_{it}^{\rm ul}} > {{\bar b}_{(v - 1)t}}, \ v=2,\ldots, \tilde V,}\\
            {0,}&{{\hat a_{it}^{\rm ul}} \le {{\bar b}_{(v - 1)t}}, \ v=2,\ldots, \tilde V. }
            \end{array}} \right.
    \end{equation}
\STATE {For each group of association actions $v \in \{1, 2, \ldots, \tilde V\}$, user $i$, and AP $j$, set}
    \begin{equation}\label{eq:v_t_update}
        \tilde a_{ijt;v}^{{\rm{ul}}} = \left\{ {\begin{array}{*{20}{c}}
        {1,}&{\hat a_{it;v}^{{\rm{ul}}} = 1,j = j^{\star},}\\
        {0,}&{\rm otherwise.}
        \end{array}} \right.
    \end{equation}
    where, $j^{\star} = \mathop {\arg \max }\limits_j [\bar a_{i1t}^{\rm ul}, \ldots ,\bar a_{iJt}^{\rm ul}]$.
\STATE {For each group of association actions $v \in \{1, 2, \ldots, \tilde V\}$, given the vector ${\tilde {\bm a}_{t;v}^{\rm ul}} = [\tilde a_{i1t;v}^{{\rm{ul}}},\ldots, \tilde a_{iJt;v}^{{\rm{ul}}}]_{i}^{\rm T}$, $\forall i$, solve (\ref{eq:power_allocation_uplink}) to obtain $r_{vt}^{\rm ul}$. }
\STATE Select the association action ${\bm a}_{t}^{\rm ul} = \arg \mathop {\max }\nolimits_{\{ \tilde a_{ijt;v}^{{\rm{ul}}}\} } r_{vt}^{{\rm{ul}}}$.
\STATE \textbf{Output:} The association action ${\bm a}_{t}^{\rm ul}$.
\end{algorithmic}
\end{algorithm}

Summarily, the proposed DRL-based uplink optimization method can be presented in Algorithm \ref{Alg:DRL_uplink_opt}.
\begin{algorithm}[!htp]
\caption{DRL-based uplink optimization}
\label{Alg:DRL_uplink_opt}
\begin{algorithmic}[1]
\STATE {\textbf {Initialize:}} The maximum number of episodes $N_{epi}$, the maximum number of epochs per episode $N_{epo}$, initial exploration decaying rate $\epsilon$, DNN $\mu(\bm s_t^{\rm ul}|\theta_t^{\mu})$ with network parameters $\theta_t^{\mu}$, initial reward $r_0^{\rm ul} = 1$, and users' randomly initialized transmit power.
\STATE {\textbf {Initialize:}} Replay memory with capacity $C$, minibatch size $|{{\mathcal T}_t}|$, and DNN training interval $T_{\rm ti}$.
\FOR{each episode in $\{1, \ldots, N_{epi}\}$}
\STATE Calculate the state space according to locations of APs and users and users' randomly initialized transmit power.
\FOR {each epoch $\bar t = 1, \ldots, N_{epo}$}
\STATE Select a relaxed action vector $\bar {\bm a}_{\bar t}^{\rm ul} = \mu({\bm s}_{\bar t}^{\rm ul}|\theta_{\bar t}^{\mu}) + \epsilon N_f$, where $\epsilon$ decays over time.
\STATE Call Algorithm \ref{Alg:action_quantization} to choose the association action $\bm a_{\bar t}^{\rm ul}$.
\IF {$\bm a_{\bar t}^{\rm ul}$ results in the violation of constraints in (\ref{eq:uplink_subproblem})}
\STATE {Cancel the action and update the reward by $r_{\bar t}^{\rm ul} = r_{\bar t}^{\rm ul} - \varpi  |r_{{\bar t}-1}^{\rm ul}|$.}
\ELSE
\STATE {Execute the action and observe the subsequent state ${\bm s}_{{\bar t}+1}^{\rm ul}$.}
\ENDIF
\STATE Store the transition $(\bm s_{\bar t}^{\rm ul}, \bm a_{\bar t}^{\rm ul}, \bm s_{{\bar t}+1}^{\rm ul})$ in the memory.
\STATE If ${\bar t} \ge |{{\mathcal T}_t}|$, sample a random minibatch of $|{{\mathcal T}_t}|$ transitions $(\bm s_m^{\rm ul}, \bm a_m^{\rm ul}, \bm s_{m+1}^{\rm ul})$ from the memory.
\STATE If $\bar t \mod T_{\rm ti} == 0$, update the network parameters ${\theta_{\bar t} ^{{\mu }}}$ by minimizing the loss function $L({\theta_{\bar t} ^{{\mu}}})$ using the ADAM method.
\ENDFOR
\ENDFOR
\end{algorithmic}
\end{algorithm}

\subsection{Solution to the downlink optimization subproblem}
Like (\ref{eq:uplink_subproblem}), (\ref{eq:downlink_subproblem}) is also a mixed-integer and sequence-dependent optimization problem. Therefore, the procedure of solving (\ref{eq:downlink_subproblem}) is similar to that of solving (\ref{eq:uplink_subproblem}), and we do not present the detailed steps of the DRL-based downlink optimization method in this subsection for brevity.
However, there are differences in some aspects, for example, the design of action and state space and the reward function. For the DRL-based downlink optimization method, we design its action space, state space, and the reward function as the following.
\begin{itemize}
\item \textbf{state space $\bm s_t^{\rm dl}$:} $\bm s_t^{\rm dl} = [\bm o_{t}; \bm h_t; {\bm I}_{t}^{\rm dl}; \bm g_t]$ is a column vector, where $o_{jt} \in {\bm o}_t \in {\mathbb R}^{J }$ indicates the number of users to which AP $j$ transmits VR video frames, $h_{ijkt} \in {\bm h}_t \in {\mathbb C}^{NJK }$, $I_{imt} \in {\mathbb R}^{N \times N} \in {\bm I}_{t}^{\rm dl}$ denotes whether user $m$ is the interfering user of user $i$, and $g_{ijkt} \in {\bm g}_t \in {\mathbb C}^{NJK }$.
\item \textbf{action space $\bm a_{t}^{\rm dl}$:} $\bm a_t^{\rm dl} = [a_{1t}^{\rm dl}, \ldots, a_{it}^{\rm dl}, \ldots, a_{Nt}^{\rm dl}]^{\rm T}$ with $a_{it}^{\rm dl} \in \{0, 1\}$. The action of the DRL-based method at time slot $t$ is to transmit VR video frames to corresponding users.
\item \textbf{reward $\bm r_{t}^{\rm dl}$:} given $\bm a_{t}^{\rm dl}$, the reward $r_{t}^{\rm dl}$ is the objective function value of the following power control subproblem.
    \begin{subequations}\label{eq:downlink_beamformer}
    \begin{alignat}{2}
    & r_{t}^{\rm dl} =  \mathop {{\rm{maximize}}}\limits_{{ {\bm G_{it}}} } {\text{  }} B_t^{{\rm{dl}}}\left( {\bm a_t^{{\rm{dl}}}} \right)\\
    & {\rm s.t.} \text{ } \quad {\rm (\ref{eq:transformed_problem}b)-(\ref{eq:transformed_problem}e)}.
    \end{alignat}
    \end{subequations}
\end{itemize}

To solve (\ref{eq:downlink_beamformer}), Algorithm \ref{Alg:action_quantization} can be adopted to obtain the downlink association action $\bm a_{t}^{\rm dl}$.
However, given $\bm a_{t}^{\rm dl}$, it is still hard to solve (\ref{eq:downlink_beamformer}) as (\ref{eq:downlink_beamformer}) is a non-convex programming problem with the existence of the non-convex low-rank constraint (\ref{eq:transformed_problem}e).
To handle the non-convexity, a semidefinite relaxation (SDR) scheme is exploited. The idea of the SDR scheme is to directly drop out the non-convex low-rank constraint. After dropping the constraint (\ref{eq:transformed_problem}e),
it can confirm that (\ref{eq:downlink_beamformer}) becomes a standard convex semidefinite programming (SDP) problem. This is because (\ref{eq:transformed_problem}b) are (\ref{eq:transformed_problem}c) are linear constraints w.r.t ${\bm G}_{it}$ and (\ref{eq:downlink_beamformer}a) is a constant objective function. We can then explore some optimization tools such as MOSEK to solve the standard convex SDP problem effectively.
However, owing to the relaxation, power matrices $\{\bm G_{it}\}$ obtained by mitigating (\ref{eq:downlink_beamformer}) without low-rank constraints will not satisfy the low-rank constraint in general. This is due to the fact that the (convex) feasible set of the relaxed (\ref{eq:downlink_beamformer}) is a superset of the (non-convex) feasible set of (\ref{eq:downlink_beamformer}).
The following lemma reveals the tightness of exploring the SDR scheme.
\begin{lemma}\label{lem:lemma_2}
\rm {
For any user $i$ at time slot $t$, denote by $\bm G_{it}^{\star}$ the solution to (\ref{eq:downlink_beamformer}). If ${\mathcal M}_{it} = \emptyset $, then the SDR for $\bm G_{it}$ in (\ref{eq:downlink_beamformer}) is tight, that is, ${\rm rank}(\bm G_{it}^{\star}) \le 1$; otherwise, we can not claim ${\rm rank}(\bm G_{it}^{\star}) \le 1$.
}
\end{lemma}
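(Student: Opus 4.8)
The plan is to handle the two directions of the lemma separately, with most of the work in the sufficiency part, for which I would give an explicit rank-reduction argument. After the non-convex constraint (\ref{eq:transformed_problem}e) is dropped, problem (\ref{eq:downlink_beamformer}) has a constant objective, so it is a pure feasibility problem; it therefore suffices to show that from any feasible point of the relaxed SDP one can build a feasible point whose $\bm G_{it}$-block has rank at most one. When $\mathcal{M}_{it}=\emptyset$ the interference term in the left-hand side of (\ref{eq:transformed_problem}b) vanishes and the constraint for user $i$ collapses to the single linear cut ${\rm tr}(\bm H_{it}\bm G_{it})\ge\beta_{it}$ with $\beta_{it}:=N_0W^{\rm dl}\big(2^{\gamma^{\rm th}/W^{\rm dl}}-1\big)$ (the degenerate case $a_{it}^{\rm dl}=0$ is immediate, since then $\bm G_{it}=\bm 0$ is admissible and has rank $0$).

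Given a feasible $\bm G_{it}^{\star}$ of the relaxed problem, I would replace it by the rank-one surrogate
\[
\tilde{\bm G}_{it}=\frac{\bm G_{it}^{\star}\bm h_{it}\bm h_{it}^{\rm H}\bm G_{it}^{\star}}{\bm h_{it}^{\rm H}\bm G_{it}^{\star}\bm h_{it}},
\]
which is well defined because ${\rm tr}(\bm H_{it}\bm G_{it}^{\star})=\bm h_{it}^{\rm H}\bm G_{it}^{\star}\bm h_{it}\ge\beta_{it}>0$. Three routine checks then finish this direction: $\tilde{\bm G}_{it}\succeq\bm 0$ with ${\rm rank}(\tilde{\bm G}_{it})\le 1$ by inspection, so (\ref{eq:transformed_problem}d) holds; ${\rm tr}(\bm H_{it}\tilde{\bm G}_{it})={\rm tr}(\bm H_{it}\bm G_{it}^{\star})\ge\beta_{it}$, so (\ref{eq:transformed_problem}b) still holds; and, letting $\bm w=(\bm G_{it}^{\star})^{1/2}\bm h_{it}$, one has $\tilde{\bm G}_{it}=(\bm G_{it}^{\star})^{1/2}\big(\bm w\bm w^{\rm H}/\|\bm w\|^2\big)(\bm G_{it}^{\star})^{1/2}$ with $\bm w\bm w^{\rm H}/\|\bm w\|^2$ an orthogonal projector, hence $\bm 0\preceq\tilde{\bm G}_{it}\preceq\bm G_{it}^{\star}$. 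Because $\bm Z_j\succeq\bm 0$, this ordering yields ${\rm tr}(\bm Z_j\tilde{\bm G}_{it})\le{\rm tr}(\bm Z_j\bm G_{it}^{\star})$ for every AP $j$, so the power budgets (\ref{eq:transformed_problem}c) remain satisfied. Hence $\tilde{\bm G}_{it}$ is feasible for (\ref{eq:downlink_beamformer}) and meets the rank constraint, i.e., the SDR is tight.

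For the converse my plan is to show that this certificate is precisely what fails once $\mathcal{M}_{it}\ne\emptyset$: the left-hand side of (\ref{eq:transformed_problem}b) then becomes a ratio of two affine functions, and $\bm G_{it}$ is simultaneously coupled to its own SINR cut, to the SINR cuts of every user that counts $i$ as an interferer, and to the $J$ power budgets, so there is no single direction onto which $\bm G_{it}$ can be deflated while keeping all of those coupled constraints feasible. I would make this concrete with a small instance --- two mutually interfering users with binding power budgets --- for which the relaxed problem admits solutions of rank two, and/or by pointing to the well-known positive SDR gap for interference-limited multiuser beamforming; either way this only supports the negative, epistemic statement ``we cannot claim ${\rm rank}(\bm G_{it}^{\star})\le 1$''. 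The step I expect to be the real obstacle is exactly this one: turning the negative claim into something rigorous requires verifying that the coupling through $\mathcal{M}_{it}$ is strong enough to obstruct \emph{every} rank-one candidate, not merely the particular surrogate used in the first part.
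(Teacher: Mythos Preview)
Your sufficiency argument is correct and follows a genuinely different route from the paper. The paper only states that the KKT conditions of the relaxed SDP establish the rank bound and defers the details to an external reference; you instead observe that, once $\bm a_t^{\rm dl}$ is fixed, (\ref{eq:downlink_beamformer}) has a constant objective and is therefore a pure feasibility problem, so exhibiting any rank-one feasible point suffices. Your surrogate $\tilde{\bm G}_{it}=\bm G_{it}^{\star}\bm h_{it}\bm h_{it}^{\rm H}\bm G_{it}^{\star}/(\bm h_{it}^{\rm H}\bm G_{it}^{\star}\bm h_{it})$ together with the ordering $\tilde{\bm G}_{it}\preceq\bm G_{it}^{\star}$ does exactly that. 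A KKT route would instead characterize the rank of whatever optimum a solver actually returns and is the standard device in the SDR--beamforming literature; your construction is more elementary and exploits the feasibility-only structure, which the KKT argument neither needs nor uses.

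One point you should make explicit: (\ref{eq:downlink_beamformer}) is joint over all users, so after swapping $\bm G_{it}^{\star}$ for $\tilde{\bm G}_{it}$ you must also verify the SINR cuts (\ref{eq:transformed_problem}b) of every \emph{other} user $m$. In the paper's interference model $\bm G_{it}$ enters user $m$'s constraint only through the term ${\rm tr}(\bm H_{it}\bm G_{it})$ when $i\in\mathcal{M}_{mt}$; since the interfering sets $\mathcal{M}_{\cdot t}$ are defined by a symmetric distance threshold, $\mathcal{M}_{it}=\emptyset$ forces $i\notin\mathcal{M}_{mt}$ for every $m$, and the check is vacuous --- but this is precisely where the hypothesis enters and it deserves to be stated. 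Your treatment of the negative clause is appropriate: the paper likewise asserts only the epistemic ``we cannot claim'' and supplies no counterexample, so a pointer to the known SDR gap in interference-coupled multiuser beamforming is as much as can reasonably be provided.
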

\begin{proof}
The Karush-Kuhn-Tucker (KKT) conditions can be explored to prove the tightness of resorting to the SDR scheme. Nevertheless, we omit the detailed proof for brevity as a similar proof can be found in Appendix of the work \cite{yang2020should}.
\end{proof}

With the conclusion in Lemma \ref{lem:lemma_2}, we can recover beamformers from the obtained power matrices. If ${\rm rank}(\bm G_{it}^{\star}) \le 1$, $\forall i$, then execute eigenvalue decomposition on $\bm G_{it}^{\star}$, and the principal component is the optimal beamformer $\bm g_{it}^{\star}$; otherwise, some manipulations such as a randomization/scale scheme \cite{luo2010semidefinite} should be performed on $\bm G_{it}^{\star}$ to impose the low-rank constraint.

Note that (\ref{eq:downlink_beamformer}) should be solved for $\tilde V$ times at each time slot. To speed up the computation, they can be optimized in parallel. Moreover, it is tolerable to complete the computation within the interval $[t, t+M]$ as users' locations in $M$ time slots are obtained.

Finally, we can summarize the DRL-based optimization algorithm of mitigating the problem of enhancing users' VR experiences in Algorithm \ref{alg:final_algorithm}.

\begin{algorithm}
\caption{DRL-based optimization algorithm}
\label{alg:final_algorithm}
\begin{algorithmic}[1]
\STATE \textbf{Initialization:} Run initialization steps of Algorithms \ref{Alg:Alg1}, \ref{Alg:action_quantization}, and \ref{Alg:DRL_uplink_opt}, and initialize the ESN training interval $T_{\rm pr}$.
\STATE Call Algorithm \ref{Alg:DRL_uplink_opt} to pre-train the uplink DNN $\mu({\bm s}_t^{\rm ul}|\theta_t^{\mu})$. Likewise, pre-train the downlink DNN $\mu({\bm s}_t^{\rm dl}|\theta_t^{Q})$.
\STATE Run steps 2-8 of Algorithm \ref{Alg:Alg1} to pre-train ESN models.
\FOR {each time slot $t = 1, 2, \ldots, T$}
\STATE Run step 9 of Algorithm \ref{Alg:Alg1} to obtain predicted location $\hat {\bm y}_{i(t+M)}$ of each user $i$.
\STATE Run steps 6-12 of Algorithm \ref{Alg:DRL_uplink_opt} to obtain uplink association action ${\bm a}_{t+M}^{\rm ul}$ and transmit power ${\bm p}_{t+M}$. Likewise, optimize the downlink association action ${\bm a}_{t+M}^{\rm dl}$ and transmit beamformer ${\bm g}_{i(t+M)}$ for each user $i$.
\IF {$ t \mod T_{\rm pr} == 0$}
\STATE Steps 2-8 of Algorithm \ref{Alg:Alg1}.
\ENDIF
\ENDFOR
\end{algorithmic}
\end{algorithm}

\section{Simulation and Performance Evaluation}
\subsection{Comparison algorithms and parameter setting}
To verify the effectiveness of the proposed algorithm, we compare it with three benchmark algorithms: 1) $k$-nearest neighbors (KNN) based action quantization algorithm: The unique difference between the KNN-based algorithm and the proposed algorithm lies in the scheme of quantizing uplink and downlink action spaces. For the KNN-based algorithm, it adopts the KNN method \cite{DBLP:journals/tmc/HuangBZ20} to quantize both uplink and downlink action spaces;
2) DROO algorithm: Different from the proposed algorithm, DROO leverages the order-preserving quantization method \cite{DBLP:journals/tmc/HuangBZ20} to quantize both uplink and downlink action spaces;
3) Heuristic algorithm: The heuristic algorithm leverages the greedy admission algorithm in \cite{tang2019service} to determine $\bm a_t^{\rm ul}$ and $\bm a_{t}^{\rm dl}$ at each time slot $t$. Besides, the user consuming less power in this algorithm will establish the connection with an AP(s) on priority.

To test the practicality of the developed parallel ESN learning method, realistic user movement datasets are generated via Google Map. Particularly, for a user, we randomly select its starting position and ending position on the campus of Singapore University of Technology and Design (SUTD). Given two endpoints, we use Google Map to generate the user's 2D trajectory. Next, we linearly zoom all $N$ users' trajectories into the communication area of size $0.5 \times 0.5$ km$^2$.

Additionally, the parameters related to APs and downlink transmission channels are listed as follows: the number of APs $J = 3$, the number of antenna elements $K = 2$, the antenna gain $G = 5$ dB, $g = 1$ dB, $\phi = \pi/3$, $\vartheta = \pi/2$, $W^{\rm dl} = 800$ MHz, $\gamma^{\rm th} = 1$ Gb/s, $\eta_{\rm LoS} = 2.0$, $\eta_{\rm NLoS} = 2.4$, $\sigma_{\rm LoS}^2 = 5.3$, $\sigma_{\rm NLoS}^2 = 5.27$, $D^{\rm th} = 50$ m, $x_o = y_o = 250$ m, $\theta_j = \pi/3$, $\tilde E_j = 40$ dBm, $E_j^c = 30$ dBm, $H_j = 5.5$ m, $\forall j$ \cite{DBLP:journals/twc/ChenSSLY20}.
User and uplink transmission channel-related parameters are shown as below: uplink system bandwidth $W^{\rm ul} = 200$ MHz, $\theta^{\rm th } = 200$, $\bar h = 1.8$ m, $\sigma_h^2 = 0.05$ m, $\alpha = 5$, $c_{ij} = 0.3$, $p_{i}^c = 23$ dBm, $\tilde p_i = 27$ dBm, $\forall i$, $j$.

Set other learning-correlated parameters as below: ${\zeta}  = 1$, $\xi = 0.25$, $\bar r_{\rm max} = 1000$, the sample number $Q = 6$, the number of future time slots $M = 8$, $N_i = 2$, $\forall i$, $N_o = 2$, $N_r = 300$, and $T_{\rm pr} = 5$.
For both uplink DNN and downlink DNN, the first hidden layer has $120$ neurons, and the second hidden layer has $80$ neurons.
The replay memory capacity $C = 1e$+6, $N_{epi} = 10$, $N_{epo} = 1000$, $\varpi = 10$, $\sigma^2 = 0.36$, $\epsilon = 0.99$.
More system parameters are listed as follows: carrier frequency $f_c = 28$ GHz, light of speed {$c = 3.0e$+8 m/s}, noise power spectral density $N_0 = -167$ dBm/Hz, and $T = 5000$ time slots.

\subsection{Performance evaluation}
To comprehensively understand the accuracy and the availability of the developed learning and optimization methods, we illustrate their performance results. In this simulation, we first let the AP number $J = 3$ and the mobile user number $N = 16$.

To validate the accuracy of the parallel ESN learning method on predicting mobile users' locations, we plot the actual trajectory of a randomly selected mobile user and its correspondingly predicted trajectory in Fig. \ref{fig_predicted_track}. In Fig. \ref{fig_nrmse}, the accuracy, which is measured by the normalized root mean-squared error (NRMSE) \cite{DBLP:journals/nn/ScardapaneWP16}, of predicted trajectories of $16$ mobile users is plotted.
From Fig. \ref{fig_x_y_coordinate}, we can observe that:
i) when the orientation angles of users will not change fast, the learning method can exactly predict users' locations. When users change their moving directions quickly, the method loses their true trajectories. However, the method will re-capture users' tracks after training ESN models based on newly collected users' location samples;
ii) the obtained NRMSE of the predicted trajectories of all mobile users will not be greater than $0.03$. Therefore, we may conclude that the developed parallel ESN learning method can be utilized to predict mobile users' locations.
\begin{figure}[!t]
\centering
\subfigure[A user's true and predicted trajectories]{\includegraphics[width=2.4in, height=1.3in]{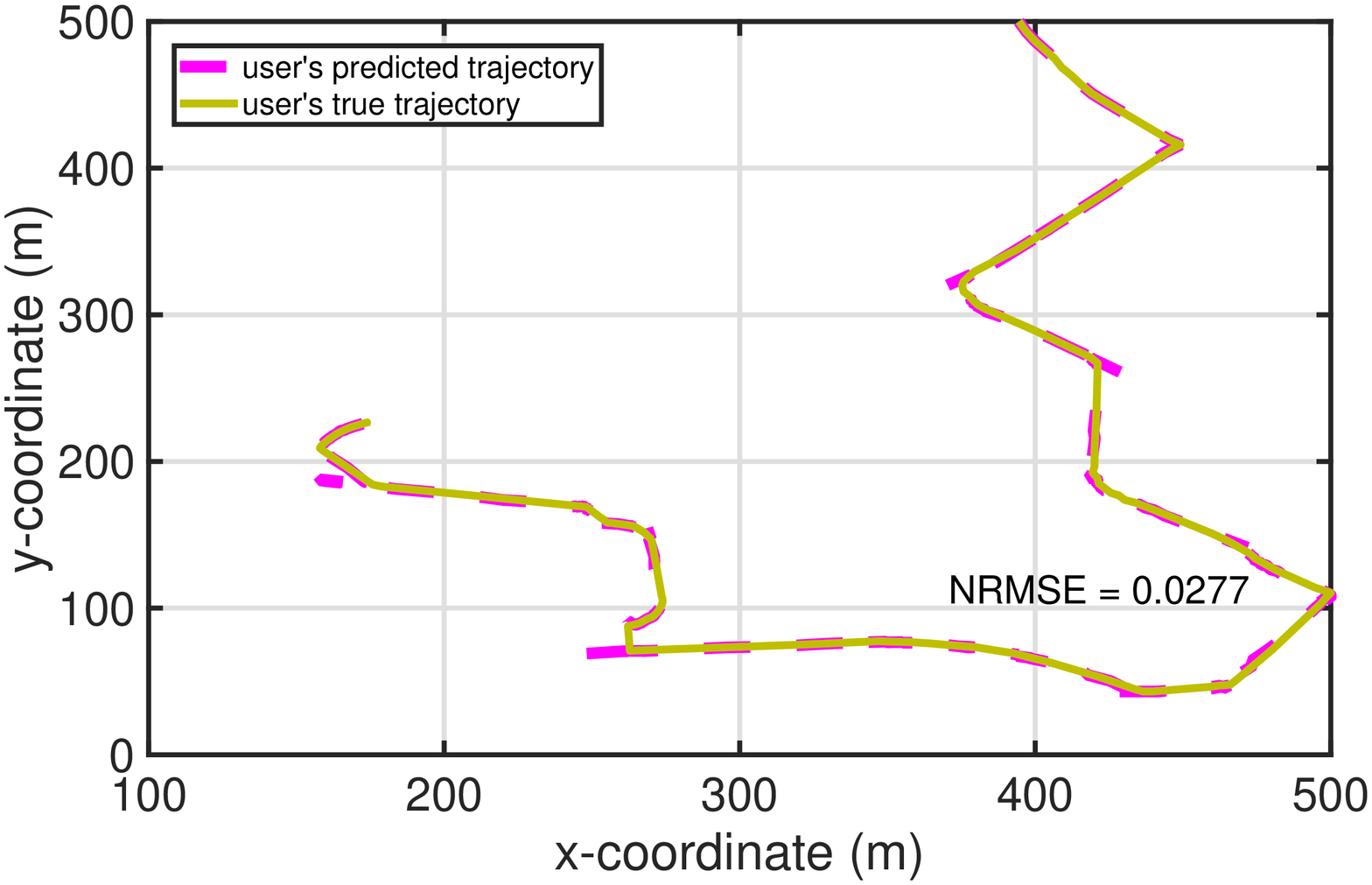}%
\label{fig_predicted_track}}
\hfil
\subfigure[NRMSE of predicted trajectories of $N$ users]{\includegraphics[width=2.4in, height=1.3in]{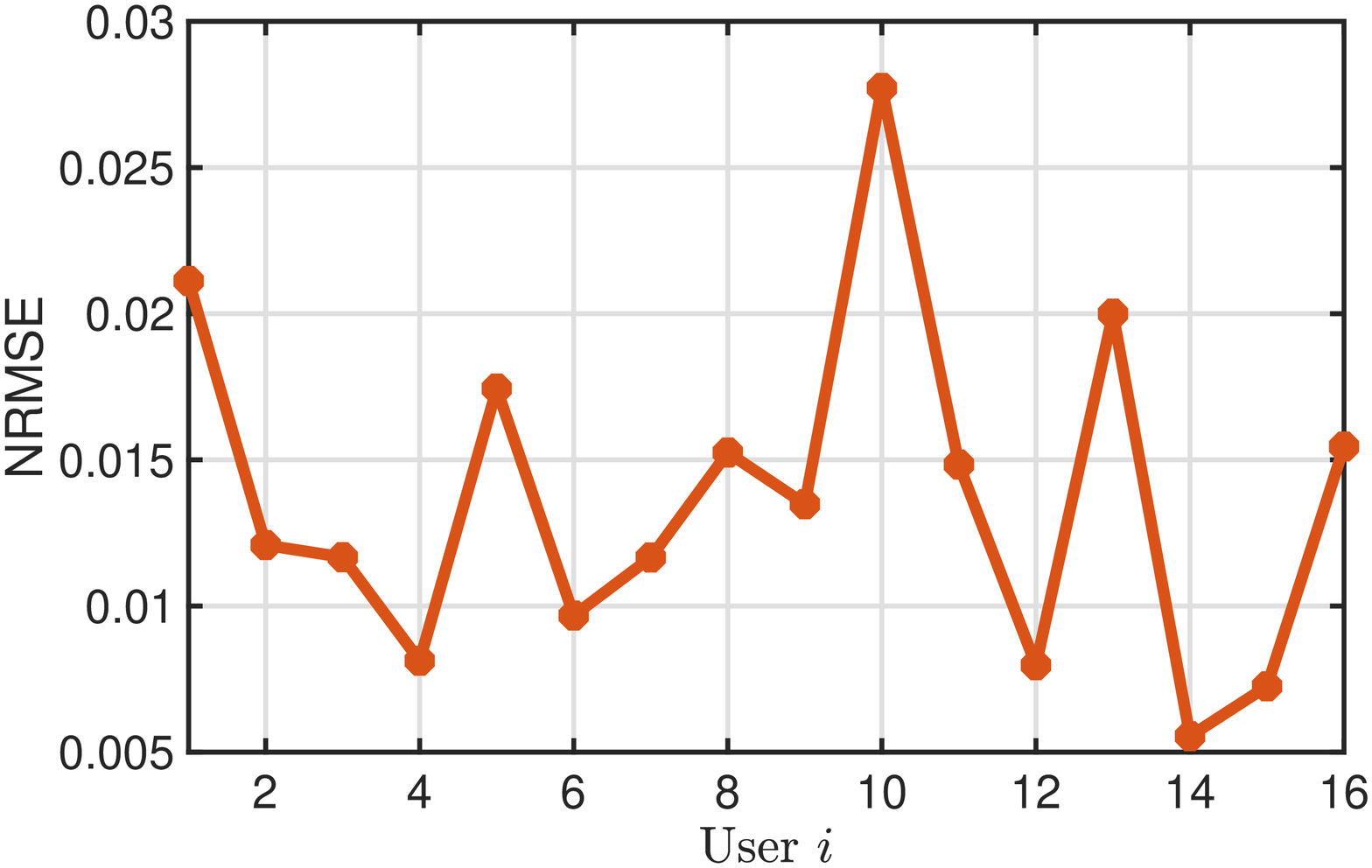}%
\label{fig_nrmse}}
\caption{Prediction accuracy of the parallel ESN learning method.}
\label{fig_x_y_coordinate}
\end{figure}

Next, to evaluate the performance of the proposed DRL-based optimization algorithm comprehensively, we illustrate the impact of some DRL-related crucial parameters such as minibatch size, training interval, and learning rate on the convergence performance of the proposed algorithm. DNN training loss and moving average reward, which is the average of the achieved rewards over the last $50$ epochs, are leveraged as the evaluation indicators.

Fig. \ref{fig_converge_batchsize} plots the tendency of the DNN training loss and the achieved moving average reward of the proposed algorithm under diverse minibatch sizes.
This figure illustrates that: i) a great minibatch size value will cause the DNN to converge slowly or even not.
As shown in Fig. \ref{fig_first_case_bats}, $L(\theta_{465}^{\mu}) = 0.1885$ when we set $|{\mathcal T}_{t}| = 512$. Yet, $L(\theta_{465}^{\mu}) = 0.1023$ when we let $|{\mathcal T}_{t}| = 64$. The result in Fig. \ref{fig_second_case_bats} shows that DNN does not converge after 10000 epochs when $|{\mathcal T}_{t}| = 2048$.
This is because a great $|{\mathcal T}_{t}|$ indicates overtraining, resulting in the local minima and degraded convergence performance.
Further, a large minibatch size value consumes more training time at each training epoch. Therefore, we set the training minibatch size $|{\mathcal T}_t| = 64$ in the simulation;
ii) when $|{\mathcal T}_t| = 64$, $r_{\bar t}^{\rm ul}$ and $r_{\bar t}^{\rm dl}$ gradually increase and stabilize at around $0.7141$ and $0.9375$, respectively. The fluctuation is mainly caused by the random sampling of training data and user movement.
\begin{figure*}[!t]
\centering
\subfigure[Uplink training loss vs. minibatch size]{\includegraphics[width=2.3in, height=1.3in]{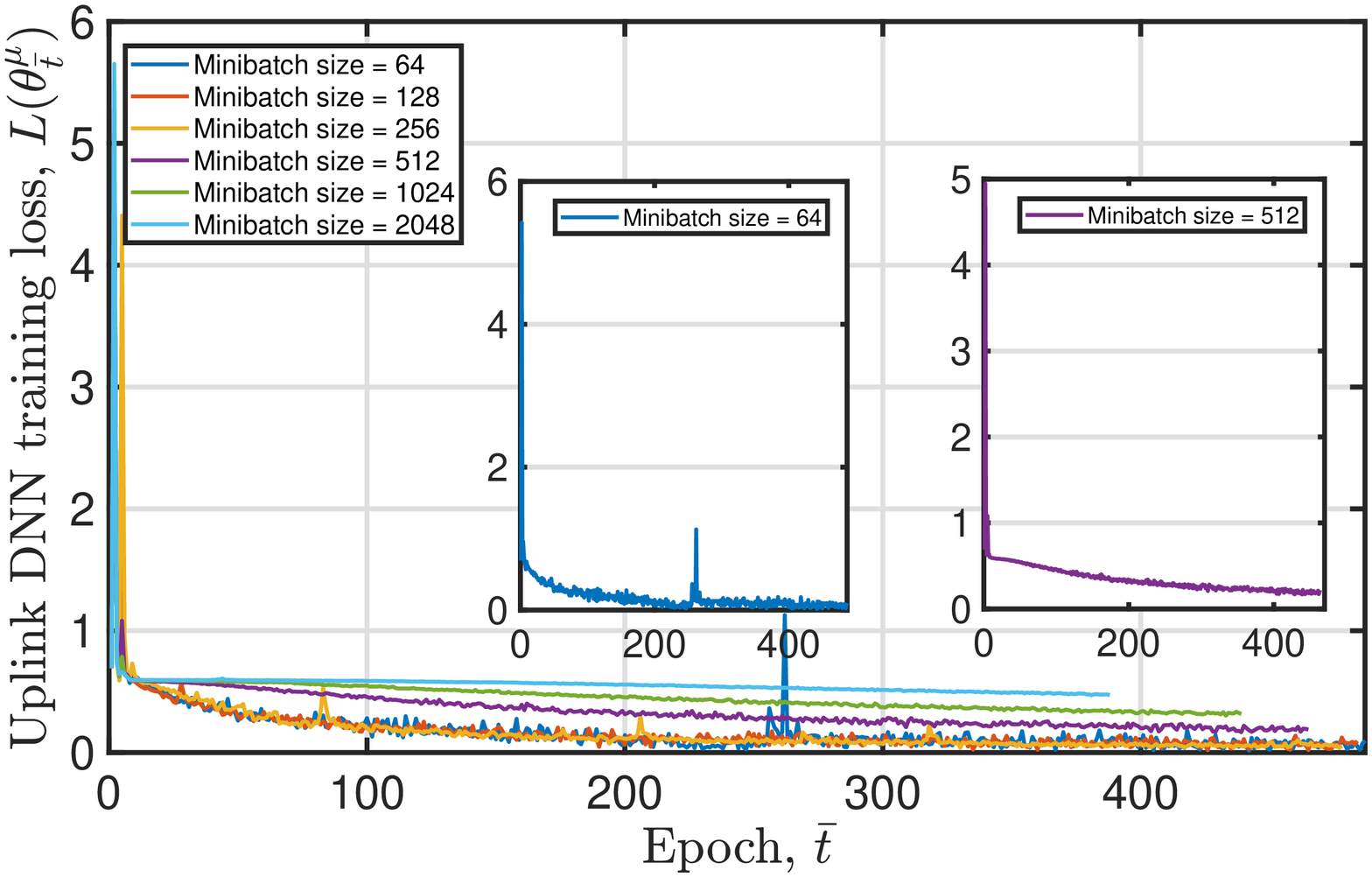}%
\label{fig_first_case_bats}}
\hspace{0.1\linewidth}
\subfigure[Uplink reward vs. minibatch size]{\includegraphics[width=2.3in, height=1.3in]{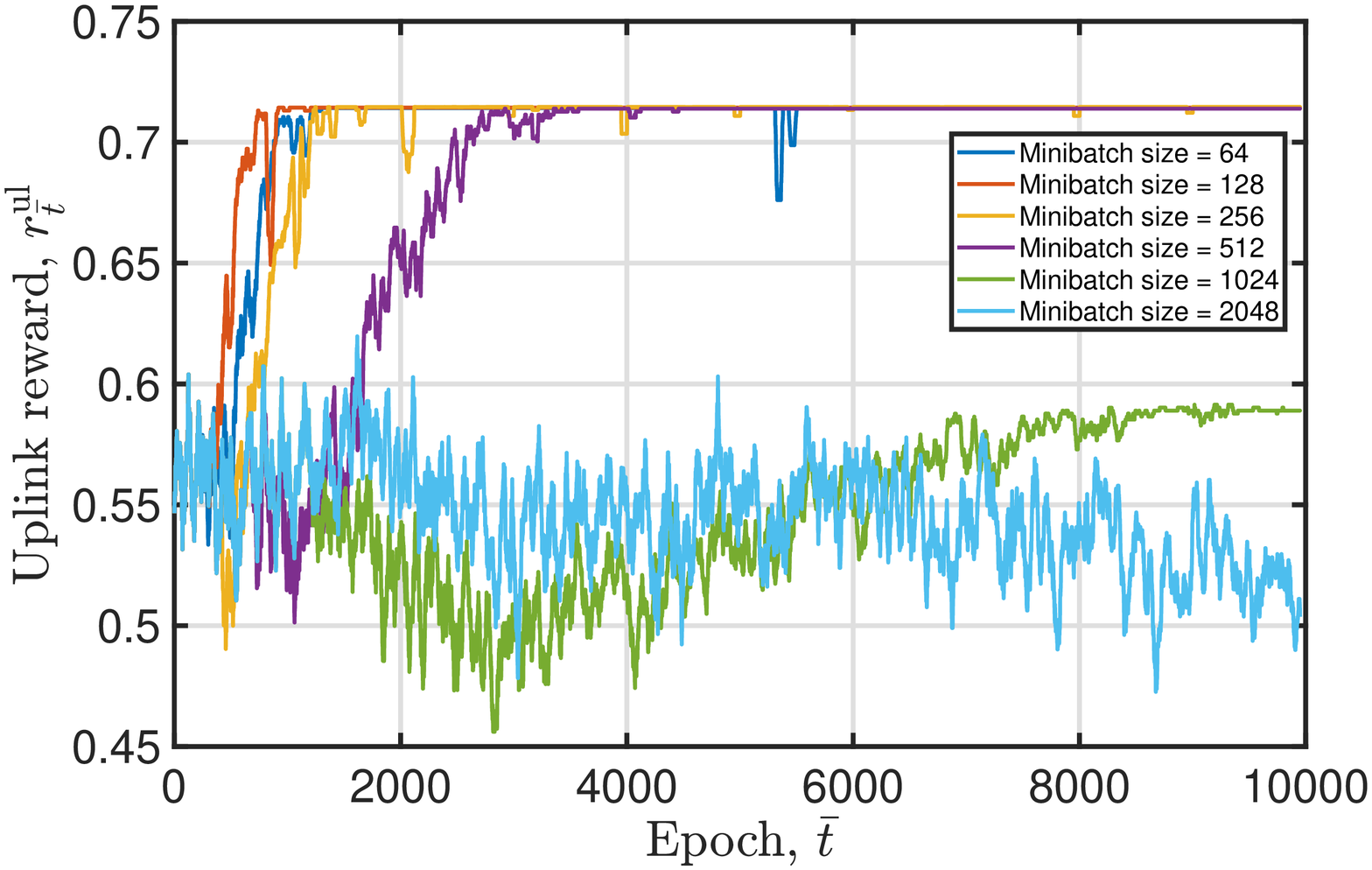}%
\label{fig_second_case_bats}}
\hfil
\subfigure[Downlink training loss vs. minibatch size]{\includegraphics[width=2.3in, height=1.3in]{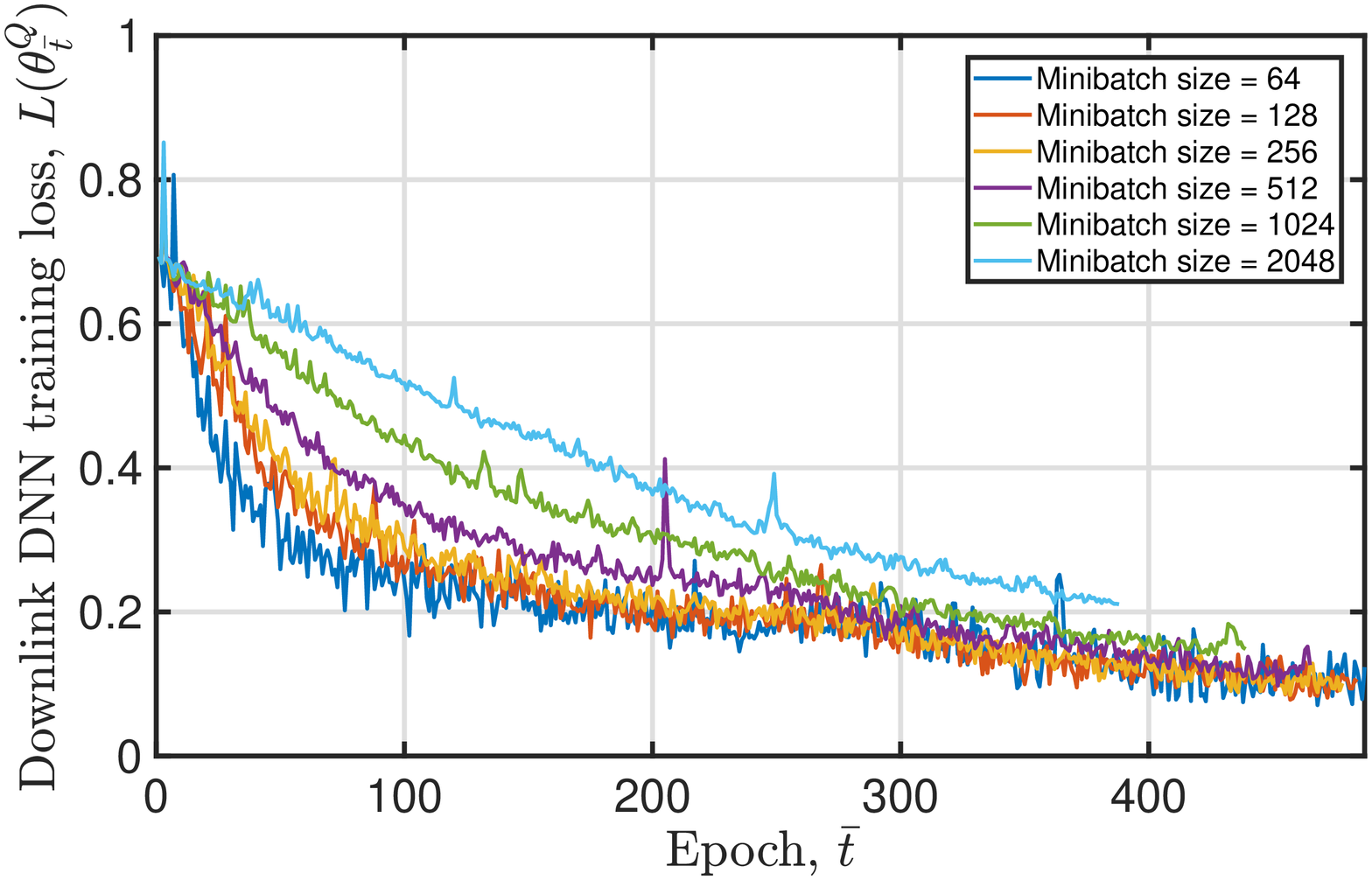}%
\label{fig_third_case_bats}}
\hspace{0.1\linewidth}
\subfigure[Downlink reward vs. minibatch size]{\includegraphics[width=2.3in, height=1.3in]{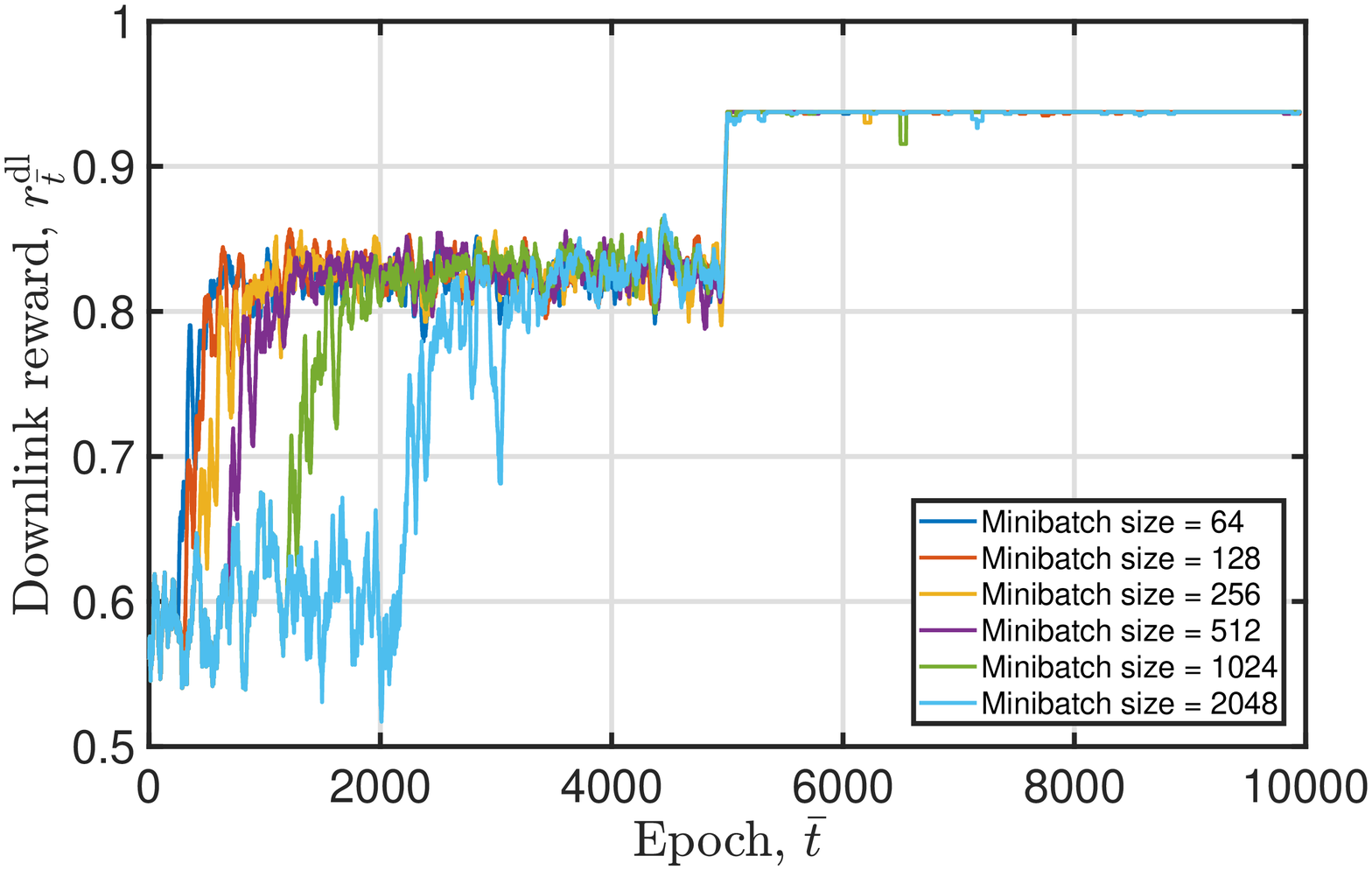}%
\label{fig_forth_case_bats}}
\caption{The impact of minibatch size $|{\mathcal T}_{t}|$ on the convergence performance of the proposed algorithm.}
\label{fig_converge_batchsize}
\end{figure*}

Fig. \ref{fig_converge_training_interval} illustrates the tendency of obtained uplink and downlink DNN training losses and moving average rewards under diverse training interval values.
From this figure, we can observe that a small training interval value indicates faster convergence speed. For example, if we set the training interval ${T_{\rm ti}} = 5$, the obtained $r_{\bar t}^{\rm ul}$ converges to $0.7156$ when epoch $\bar t > 439$. If we let the training interval ${T_{\rm ti}} = 100$,  $r_{\bar t}^{\rm ul}$ converges to $0.7149$ when epoch $\bar t > 4975$, as shown in Fig. \ref{fig_second_case_tri}.
However, it is unnecessary to train and update the DNN frequently, which will bring more frequent policy updates, if the DNN can converge. Thus, to achieve the trade-off between the convergence speed and the policy update speed, we set ${T_{\rm ti}} = 20$ in the simulation.
\begin{figure*}[!t]
\centering
\subfigure[Uplink training loss vs. training interval]{\includegraphics[width=2.4in, height=1.3in]{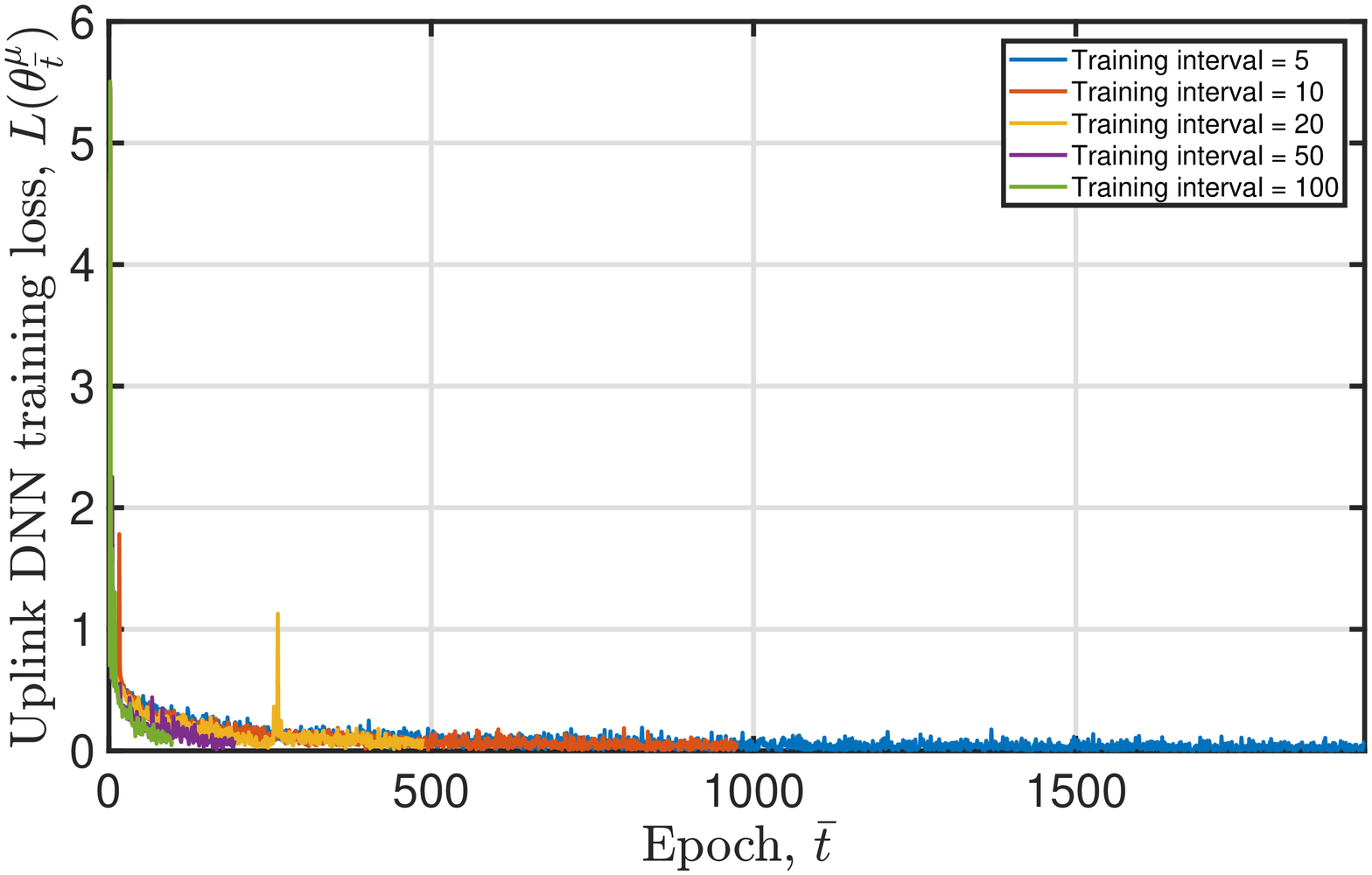}%
\label{fig_first_case_tri}}
\hspace{0.1\linewidth}
\subfigure[Uplink reward vs. training interval]{\includegraphics[width=2.4in, height=1.3in]{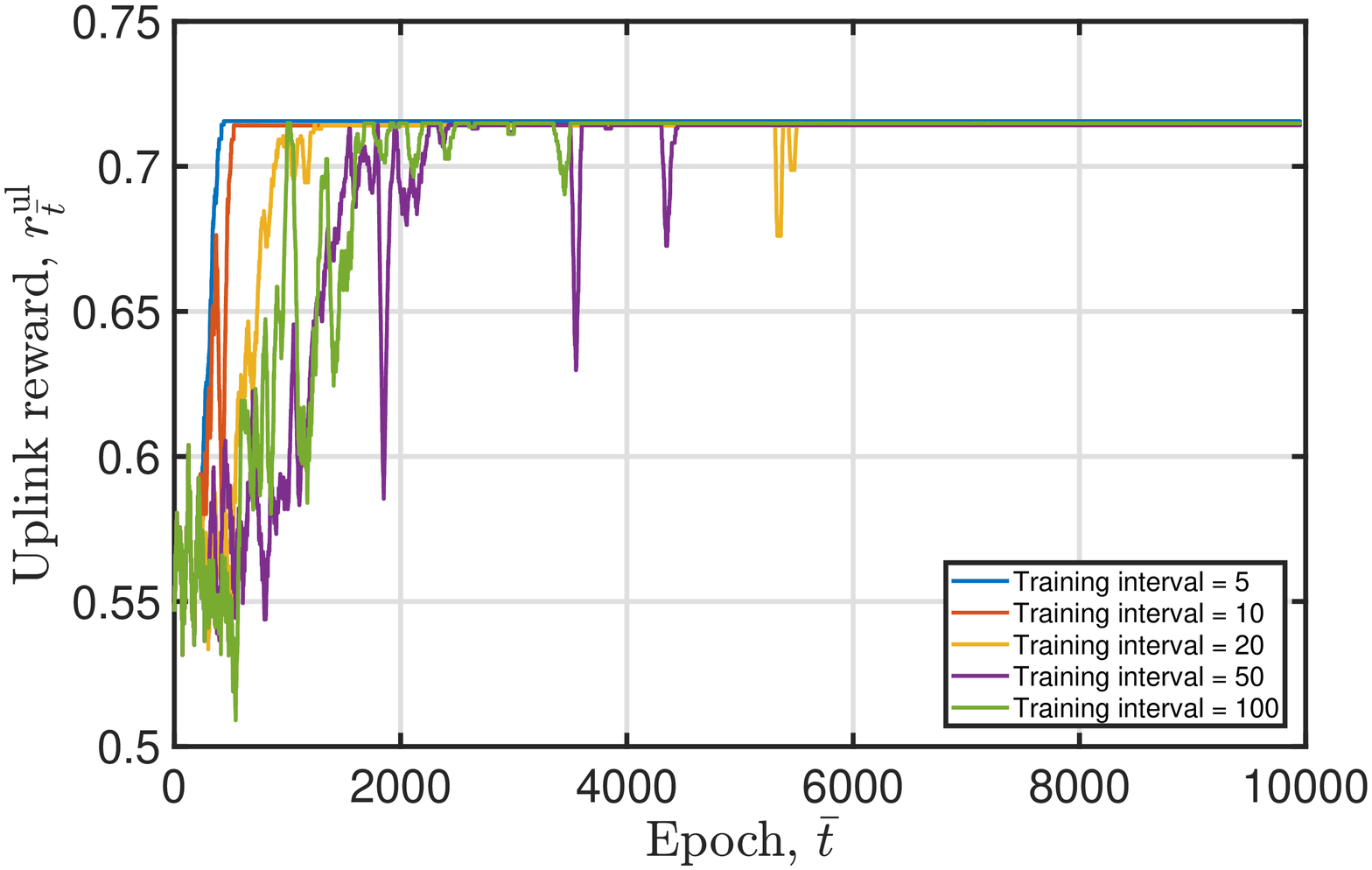}%
\label{fig_second_case_tri}}
\hfil
\subfigure[Downlink training loss vs. training interval]{\includegraphics[width=2.4in, height=1.3in]{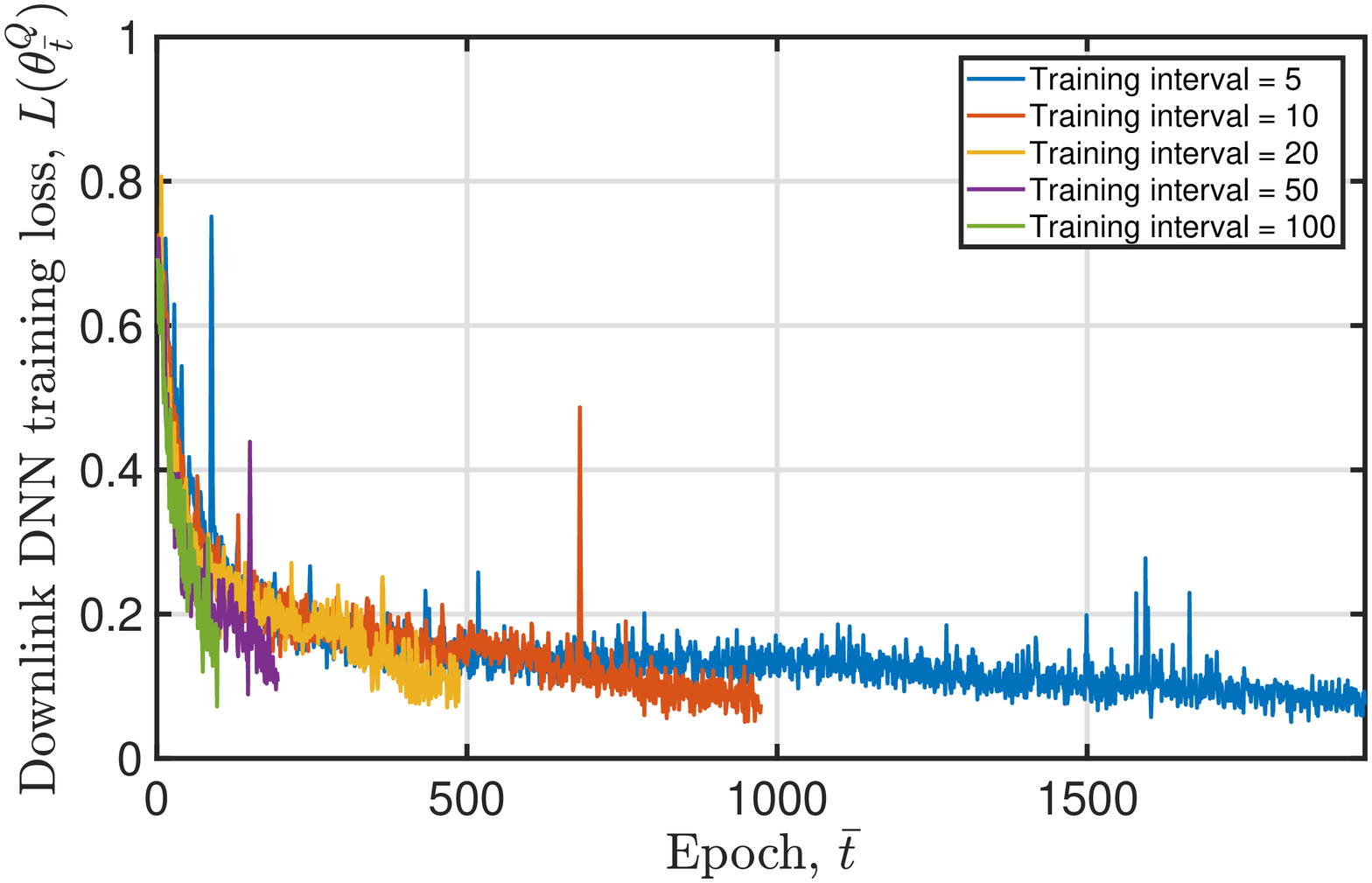}%
\label{fig_third_case_tri}}
\hspace{0.1\linewidth}
\subfigure[Downlink reward vs. training interval]{\includegraphics[width=2.4in, height=1.3in]{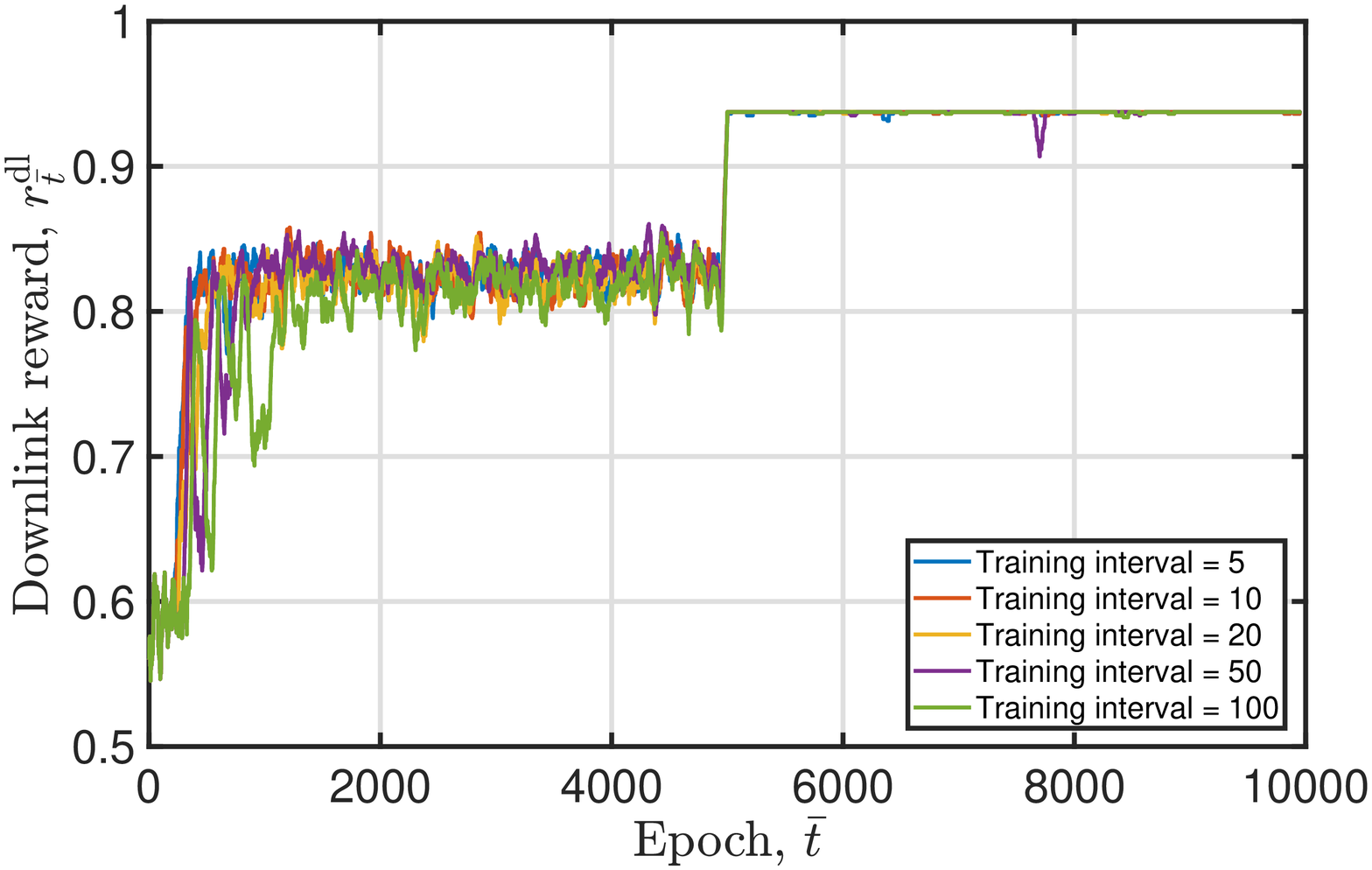}%
\label{fig_forth_case_tri}}
\caption{The impact of DNN training interval $T_{\rm ti}$ on the convergence performance of the proposed algorithm.}
\label{fig_converge_training_interval}
\end{figure*}

Fig. \ref{fig_converge_learning_rate} depicts the tendency of achieved DNN training loss and moving average reward of the proposed algorithm under different learning rate configurations.
From this figure, we have the following observations:
i) for the uplink DNN, when given a small learning rate value, it may converge to the local optimum or even not;
ii) for the downlink DNN, both a small and a great learning rate value will degrade convergence performance.
Therefore, when training the uplink DNN, we set the learning rate $l_r^{\rm ul} = 0.1$, which can lead to good convergence performance. For instance, $r_{\bar t}^{\rm ul}$ converges to $0.7141$ when epoch $\bar t \ge 1300$ and the variance of $r_{\bar t}^{\rm ul}$ gradually decreases to zero with an increasing epoch $\bar t$. We set the learning rate $l_r^{\rm dl} = 0.01$ when training the downlink DNN. Given this parameter setting, the obtained $L(\theta_{\bar t}^{Q})$ is smaller than $0.2$ after training for $200$ epochs.
\begin{figure*}[!t]
\centering
\subfigure[Uplink training loss vs. learning rate]{\includegraphics[width=2.3in, height=1.3in]{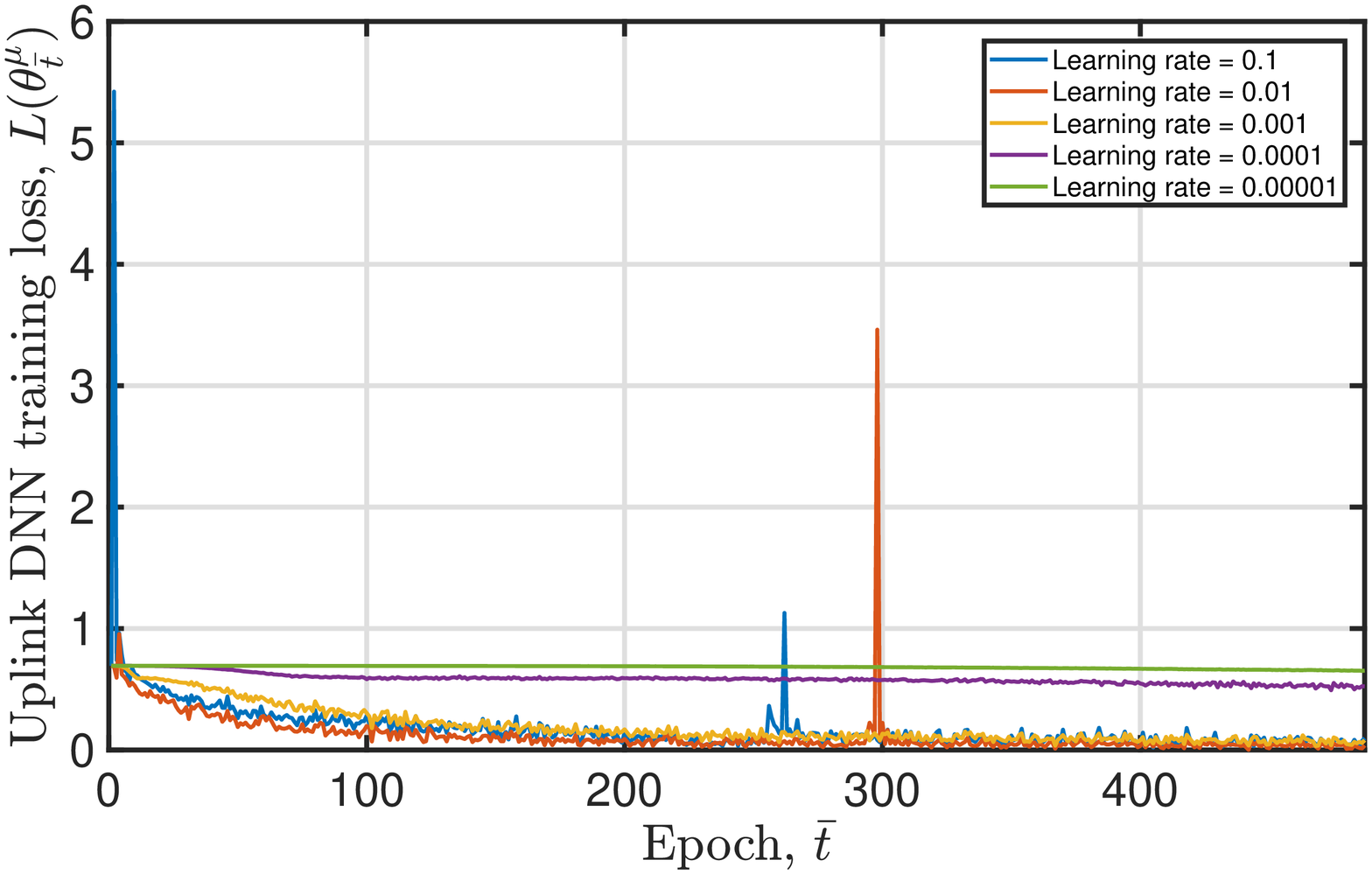}%
\label{fig_first_case_lr}}
\hspace{0.1\linewidth}
\subfigure[Uplink reward vs. learning rate]{\includegraphics[width=2.3in, height=1.3in]{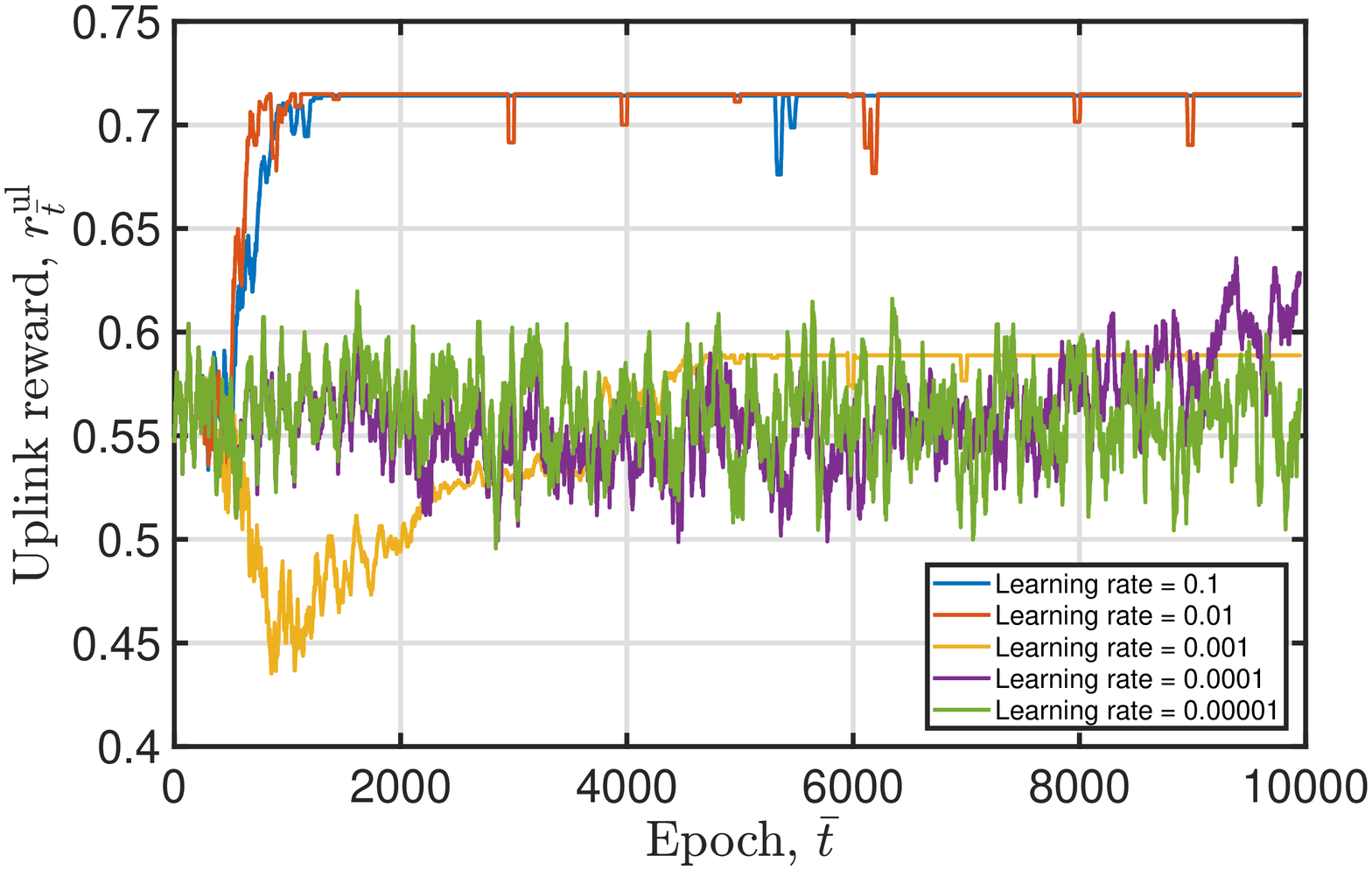}%
\label{fig_second_case_lr}}
\hfil
\subfigure[Downlink training loss vs. learning rate]{\includegraphics[width=2.3in, height=1.3in]{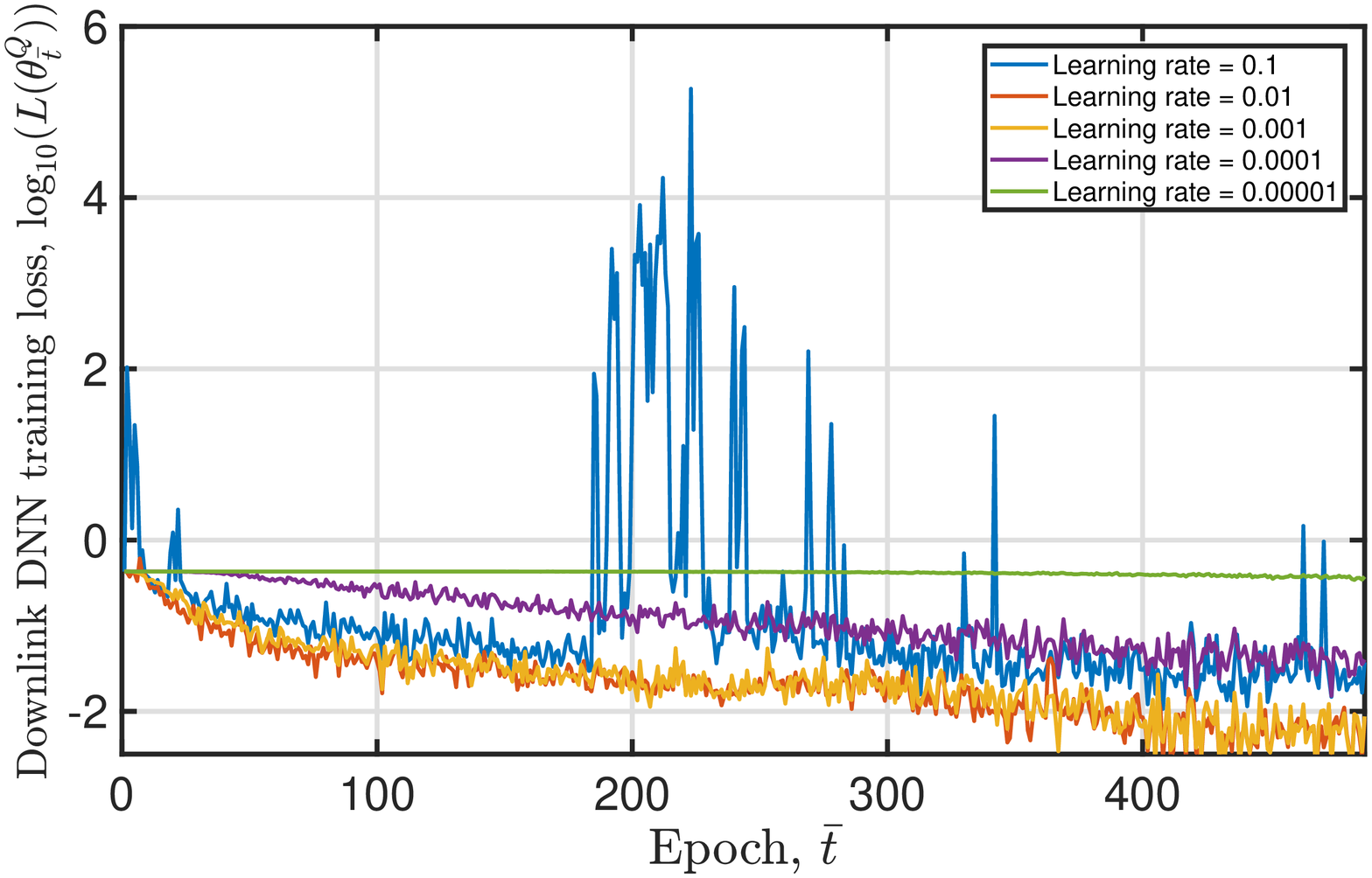}%
\label{fig_third_case_lr}}
\hspace{0.1\linewidth}
\subfigure[Downlink reward vs. learning rate]{\includegraphics[width=2.3in, height=1.3in]{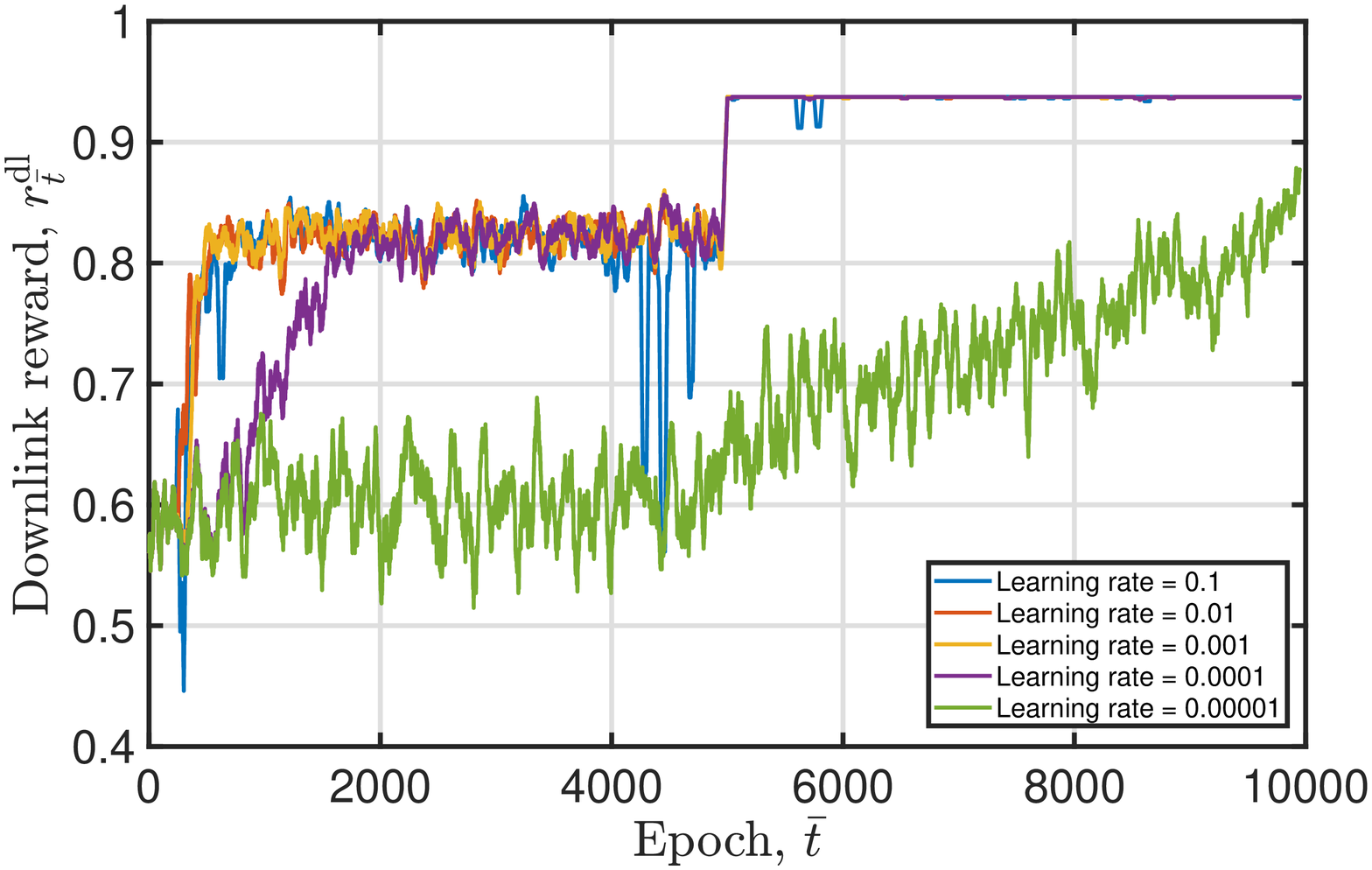}%
\label{fig_forth_case_lr}}
\caption{The impact of learning rates $l_r^{\rm ul}$ and $l_r^{\rm dl}$ on the convergence performance of the proposed algorithm.}
\label{fig_converge_learning_rate}
\end{figure*}

At last, we verify the superiority of the proposed algorithm by comparing it with other comparison algorithms. Particularly, we plot the achieved objective function values of all comparison algorithms under varying number of mobile users $N \in \{8, 12, 16, 20\}$ in Fig. \ref{fig_utility_user_number}.
Before the evaluation, the proposed algorithm and the other two action quantization algorithms have been trained with $10000$ independent wireless channel realizations, and their downlink and uplink action quantization policies have converged. This is reasonable because we are more interested in the long-term operation performance for field deployment. Besides, we let the service ability of an AP $\tilde M$ vary with $N$ with the $(N, \tilde M)$ pair being $(8, 3)$, $(12, 5)$, $(16, 6)$, and $(20, 7)$.
\begin{figure}[!t]
\centering
\includegraphics[width=3.2in]{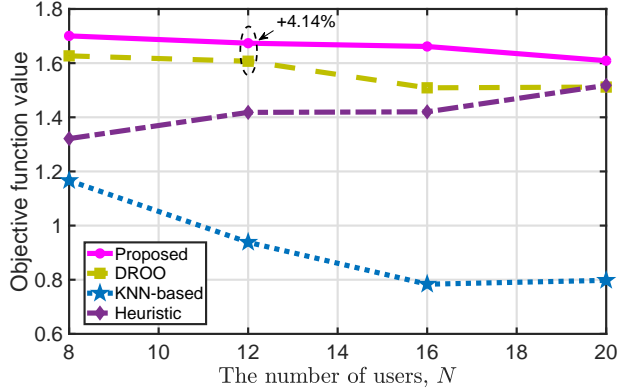}
\caption{Comparison of obtained objective function values of all comparison algorithms.}
\label{fig_utility_user_number}
\end{figure}

We have the following observations from this figure:
i) the proposed algorithm achieves the greatest objective function value. For the DROO algorithm, it gains a smaller objective function value than the proposed algorithm; for example, the achieved objective function value of DROO is $4.14\%$ less than that of the proposed algorithm. For the KNN-based algorithm, it obtains the smallest objective function value because it offers the smallest diversity in the produced uplink and downlink association action set;
ii) except for heuristic algorithm, the achieved objective function values of the other comparison algorithms decrease with the number of users owing to the increasing total power consumption. For the heuristic algorithm, its obtained objective function value increases with $N$ mainly because more users can successfully access to APs.

\section{Conclusion}
This paper investigated the problem of enhancing VR visual experiences for mobile users and formulated the problem as a sequence-dependent problem aiming at maximizing users' feeling of presence in VR environments while minimizing the total power consumption of users' HMDs.
This problem was confirmed to be a mixed-integer and non-convex optimization problem, the solution of which also needed accurate users' tracking information.
To solve this problem effectively, we developed a parallel ESN learning method to predict users' tracking information, with which a DRL-based optimization algorithm was proposed.
Specifically, this algorithm first decomposed the formulated problem into an association subproblem and a power control subproblem. Then, a DNN joint with an action quantization scheme was implemented as a scalable solution that learnt association variables from experience. Next, the power control subproblem with an SDR scheme being explored to tackle its non-convexity was leveraged to criticize the association variables.
Finally, simulation results were provided to verify the accuracy of the learning method and showed that the proposed algorithm could improve the energy efficiency by at least $4.14\%$ compared with various benchmark algorithms.

\appendix

\subsection{Proof of Lemma 1}
For any user $i \in {\mathcal U}$, suppose we are provided with a sequence of $Q$ desired input-output pairs $\{(\bm x_{i(t-Q)}, {\bm y}_{i(t-Q+1)}), \ldots, (\bm x_{i(t-1)}, {\bm y}_{it})\}$. With the input-output pairs, generate the hidden matrix ${\bm X_{it}} = \left[ {\begin{array}{*{20}{c}}
\begin{array}{l}
{\bm x_{i(t - 1)}}\\
{\bm s_{i(t - 1)}}
\end{array}& \cdots &\begin{array}{l}
{\bm x_{i(t - Q)}}\\
{\bm s_{i(t - Q)}}
\end{array}
\end{array}} \right]$ and the corresponding target location matrix ${\bm Y}_{it} = [\bm y_{it}^{\rm T}; \ldots; \bm y_{i(t-Q+1)}^{\rm T}]$ at time slot $t$. We next introduce an auxiliary matrix $\bm U = \bm X^{\rm T} \bm W \in {\mathbb R}^{Q \times N_o}$, wherein we lighten the notation $\bm X_{it}$ for $\bm X$. According to the Lagrange dual decomposition method, we can rewrite (\ref{eq:least_square_prob}) as follows
\begin{equation}\label{eq:lagrangian_dual}
\begin{array}{l}
\frac{1}{Q}\mathop {\rm minimize }\limits_{\bm W, \bm U} \quad \left\{ { {l(\bm X^{\rm T} \bm W)}  + \xi Qr(\bm W)} \right . +  \left . \bm A \odot \bm U  -  \bm A \odot \bm {X^{\rm T}W} \right \} \\
 = \frac{1}{Q}\mathop {\inf }\limits_{\bm W} \left\{ { -  \sum\limits_{n = 1}^{N_o}{[\bm A \bm z_n]^{\rm T} {\bm X^{\rm T}\bm W \bm z_n}}  + \xi Q r(\bm W)} \right\} + \\
\qquad \qquad \frac{1}{Q}\mathop {\inf }\limits_{\bm U} \left\{ { {l(\bm U)}  +  \sum\limits_{n=1}^{N_o}{[\bm A \bm z_n]^{\rm T} \bm U \bm z_n}} \right\}\\
 =  - {\xi} \mathop {\sup }\limits_{\bm W} \left\{ {\frac{1}{{\xi Q}} \sum\limits_{n = 1}^{N_o}{[\bm A \bm z_n]^{\rm T} \bm X^{\rm T} \bm W \bm z_n}  - {r(\bm W)}} \right\} - \\
\qquad \qquad \frac{1}{Q} \sum\limits_{j=1}^{J} \sum\limits_{m \in {\mathcal Q}_j} { \sum\limits_{n=1}^{N_o} {\mathop {\sup }\limits_{u_{mn}} \left\{ { -  a_{mn}u_{mn}  -  {l(u_{mn})} } \right\} } }  \\
 =  - \xi {r^{\star}}\left (\frac{1}{\xi Q} {\bm A^{\rm T} {\bm X^{\rm T}}} \right ) - \frac{1}{Q} \sum\limits_{j = 1}^{J} \sum\limits_{m \in {\mathcal Q}_j} {\sum\limits_{n=1}^{N_o} {l^{\star} (-a_{mn})} } \\
  \buildrel \Delta \over =  D(\bm A)
\end{array}
\end{equation}
where $\bm z_n \in {\mathbb R}^{N_o}$ is a column vector with the $n$-th element being one and all other elements being zero,
${\mathcal Q}_j$ is an index set including the indices of $Q$ data samples fed to slave VM $j$.

Let $\bar r(\bm C) =  {\frac{1}{\xi Q} \sum\limits_{n = 1}^{N_o}{\bm z_n^{\rm T} \bm C {\bm {Wz_n}}}  - {r(\bm W)}} $, where ${\bm C} = \frac{1}{{\xi Q}} \bm A^{\rm T} \bm X^{\rm T}$, and denote ${\bm W}^{\star}$ as the optimal solution to $\mathop {\sup }\limits_{\bm W} {\bar r(\bm C)}$. Then, calculate the derivative of ${\bar r(\bm C)}$ w.r.t ${\bm W}$,
\begin{equation}
\frac{{d \bar r(\bm C)}}{{d {\bm W}}} = \sum_{n=1}^{N_o} {{\bm C}_n {\bm z}_n^{\rm T} } - 2 {\bm W}
\end{equation}
where ${\bm C}_n = {\bm C}^{\rm T} {\bm z}_n$.

As ${\bm W \in {\mathbb R}^{{(N_i + N_r)} \times N_o}}$, the necessary and sufficient condition for obtaining ${\bm W}^{\star}$ is to enforce $\frac{{d \bar r (\bm C)}}{{d {\bm W}^{\star}}} = 0$. Then, we have
\begin{equation}\label{eq:W_star}
{\bm W}^{\star} = \frac{1}{2} \sum_{n=1}^{N_o} {{\bm C}_n {\bm z}_n^{\rm T} }
\end{equation}

By substituting ({\ref{eq:W_star}}) into $r^{\star}({\bm C})$, we can obtain (\ref{eq:r_star}).

Similarly, denote $u_{mn}^{\star}$ for any $m \in \{1, 2, \ldots, Q\}$ and $n \in \{1, 2, \ldots, N_o\}$ as the optimal solution to $l^{\star}(-a_{mn})$. As ${\bm U} \in {\mathbb R}^{Q \times N_o}$, the necessary and sufficient condition for $u_{mn}^{\star}$ is to execute $\frac{{d l^{\star}(-a_{mn})}}{{d {u_{mn}^{\star}}}} = -a_{mn} - u_{mn}^{\star} + y_{mn} = 0$. By substituting $u_{mn}^{\star}$ into $l^{\star}(-a_{mn})$, we can obtain (\ref{eq:l_star}). This completes the proof.


\ifCLASSOPTIONcaptionsoff
  \newpage
\fi




%
\bibliographystyle{IEEEtran}
\bibliography{FL_ES_VR}

\end{document}